\documentclass[preprint,11pt]{elsarticle}

\usepackage[T1]{fontenc}
\usepackage{lmodern}
\usepackage{amsmath,amssymb,amsthm}
\usepackage{booktabs}
\usepackage{etoolbox}
\usepackage{needspace}
\usepackage{microtype}
\usepackage{xcolor}
\usepackage{tikz}
\usetikzlibrary{arrows.meta,calc,decorations.pathreplacing}
\usepackage{pgfplots}
\pgfplotsset{compat=1.18}
\usepackage{algorithm}
\usepackage[noend]{algpseudocode}
\usepackage[hidelinks]{hyperref}
\hypersetup{pdftitle={Computing the Nucleon of Matching Games},
            pdfauthor={Haris Aziz, Gergely Cs\'aji}}
\biboptions{sort&compress}
\makeatletter
\def\ps@pprintTitle{%
  \let\@oddhead\@empty
  \let\@evenhead\@empty
  \def\@oddfoot{\reset@font\hfil\thepage\hfil}%
  \let\@evenfoot\@oddfoot}
\makeatother

\theoremstyle{plain}
\newtheorem{theorem}{Theorem}
\newtheorem{proposition}{Proposition}
\newtheorem{lemma}{Lemma}
\theoremstyle{definition}
\newtheorem{definition}{Definition}
\newtheorem{example}{Example}
\newtheorem{remark}{Remark}

\AtBeginEnvironment{theorem}{\Needspace{7\baselineskip}}
\AtBeginEnvironment{lemma}{\Needspace{9\baselineskip}}
\AtBeginEnvironment{proposition}{\Needspace{14\baselineskip}}
\AtBeginEnvironment{definition}{\Needspace{9\baselineskip}}
\AtBeginEnvironment{example}{\Needspace{10\baselineskip}}
\AtBeginEnvironment{proof}{\Needspace{4\baselineskip}}
\AtBeginEnvironment{thebibliography}{\interlinepenalty=10000}

\newcommand{\R}{\mathbb{R}}
\newcommand{\rev}[1]{{\color{black}#1}}% revisions
\definecolor{changeTeal}{HTML}{008C8C}
\newcommand{\core}{\operatorname{core}}
\newcommand{\nucleon}{\operatorname{nucleon}}

\newcommand{\lin}{\operatorname{lin}}

\newcommand{\LSA}{\textnormal{\textsc{LSA-MinExcess}}}
\newcommand{\SNZC}{\textnormal{\textsc{Shortest Non-Zero Cycle}}}
\newcommand{\step}[1]{\smallskip\noindent\emph{#1}}

\definecolor{cblue}{HTML}{0072B2}
\definecolor{corange}{HTML}{D55E00}
\definecolor{cgreen}{HTML}{009E73}
\definecolor{cgray}{HTML}{6E6E6E}
\tikzset{
  player/.style={circle,draw,thick,fill=white,inner sep=0pt,
                 minimum size=6mm,font=\small},
  payoff/.style={font=\small,text=cblue},
  wt/.style={font=\footnotesize,fill=white,inner sep=1.2pt},
}

\begin{document}
\begin{frontmatter}

\title{Computing the Nucleon of Matching Games}
\author{Haris Aziz}
\affiliation{organization={UNSW Sydney}, city={Sydney}, country={Australia}}
\author{Gergely Csáji}
\affiliation{organization={HUN-REN Institute for Computer Science and Control}, city={Budapest}, country={Hungary}}

\begin{abstract}
The nucleon is the multiplicative counterpart of the nucleolus: it
lexicographically maximizes {the fraction of its stand-alone value
that each coalition receives}. 
We study exact nucleon computation in matching games and in two extensions:
$b$-matching games, in which players may have several partners, and
partitioned matching games, in which one player may own several vertices.
We resolve three open problems from a recent survey. For ordinary matching
games with nonnegative rational weights, we give a polynomial-time
combinatorial algorithm using maximum-weight matching, shortest paths, and
minimum-cost flow, ; it returns both a nucleon
allocation and a compact description of the full nucleon. In contrast,
computing any nucleon allocation is NP-hard for {unit-weight}
$b$-matching games with capacities at most three and for {unit-weight}
partitioned matching games of width at most three. Finally, for the unresolved capacity-two case, we show
that exact nucleon computation is polynomial-time Turing reducible to
\SNZC{}.
\end{abstract}

\begin{keyword}
Cooperative game theory \sep Matching games \sep Nucleon \sep
Partitioned matching games \sep Combinatorial algorithms \sep
Computational complexity \sep {Kidney exchange}
\end{keyword}

\end{frontmatter}

\section{Introduction}\label{sec:intro}

{Many cooperative markets face two intertwined decisions: which
pairwise collaborations to form, and how to divide the value they create.
A weighted graph captures the technological side of this problem: vertices
are participants, edges are feasible partnerships, and an edge weight is
the value generated by that pair. If each participant can join at most one
partnership, the efficient outcome is a maximum-weight matching. The same
optimization also determines bargaining power, because any coalition can
walk away and realize the value of a maximum-weight matching using only its
own members. This gives a matching game, a compact cooperative model in
which combinatorial efficiency and coalitional fairness are directly
linked. Allowing a participant to enter several partnerships gives a
$b$-matching game; allowing one economic player to own several vertices,
as when a country contributes several patient--donor pairs to an
international kidney exchange, gives a partitioned matching game
\citep{BBK+25a}.}

{{These games arise in several applications.} In pairwise kidney exchange, incompatible patient--donor pairs swap donors
two at a time, so the number of transplants that a set of pairs can
achieve is \rev{twice} the size of a maximum matching in their compatibility graph
\citep{RSU05a}. Because larger pools yield more transplants, national
programmes have an incentive to merge their pools, and the question
becomes how to allocate transplants among countries so that each
country's benefit reflects its contribution \citep{KVPS21a,BBK+25a}.
Recent work uses cooperative solution concepts, including the Shapley
value and the nucleolus, to define each country's fair share of
transplants \citep{BBPY24a}. In
collaborative transportation, firms combine their operations to reduce
costs, and allocation rules from cooperative game theory are standard
tools for sharing the savings \citep{GuRo16a}; when the savings arise
from pairing two partners, for instance by combining one carrier's
delivery route with another's return trip, the resulting game is a
matching game. Matching games also formalize roommate and partnership
formation with transferable utility \citep{EK01a,TaYa11a}, and
$b$-matching games model settings in which each agent can take part in
several pairings, such as arranging fixtures between sports teams or
players \citep{IS07a,BKPW18a}. %In each case, the total value is a single maximum-weight matching, while the bargaining power of each group of players comes from the matchings it could form on its own.
}

The core asks that every coalition receive at least what it can earn on
its own. This requirement expresses coalitional stability, but it may
be impossible to meet. Consider three firms, each able to partner with
either of the other two, with every partnership worth one. Their joint
value is one, while each pair can also earn one. Paying all three pairs
at least one would require the total allocation to be at least $3/2$.
An allocation concept that remains defined when the core is empty is
therefore useful even in the smallest matching markets.

The nucleon, introduced by \citet{FKFH98a}, compares each coalition's
payoff with its value through a ratio. It first makes the smallest ratio
as large as possible, then improves the next-smallest ratio, and
continues lexicographically. This evaluates shortfalls relative to the
opportunities of the affected coalition. For example, a coalition paid
$90$ when it can earn $100$ receives a larger fraction of its value than a
coalition paid $1/2$ when it can earn $1$. The nucleon gives priority to
the latter shortfall. The nucleolus instead compares additive excesses,
so the former shortfall is larger under its objective. Both concepts
refine the idea of protecting the least satisfied coalitions, but they
measure satisfaction differently.

{Ratios are natural when coalitions differ widely in size. A guarantee
such as ``every group of countries receives at least $90\%$ of the
transplants it could organize on its own'' is meaningful to every
participant, whereas an additive guarantee of ``at most two transplants
fewer'' is negligible for a large coalition and decisive for a small one.
\rev{The first stage of the nucleon maximizes the common ratio guaranteed
to proper coalitions. Capping this ratio at one gives the best
multiplicative approximate-core guarantee for all coalitions, a central
notion in cost sharing and in approximation algorithms for cooperative
games \citep{FaKe93a,JaMa07a,Vazi22a}.} For matching games this guarantee is
always at least $2/3$, and the triangle above shows that $2/3$ cannot be
improved \citep{FaKe93a,Vazi22a}. Whenever the core is nonempty, the
nucleon lies in the core. The later stages refine the guarantee
lexicographically, so that no coalition receives a smaller fraction of its
value than necessary. The nucleon thus combines the worst-case guarantee
of approximation algorithms with the lexicographic fairness of the
nucleolus, which makes it a natural allocation rule for the applications
above.}

Can the nucleon be computed directly from the graph, and how does the
answer change when players may have several partners or own several
vertices? \citet{FKFH98a} give a polynomial-time algorithm for matching
games by reducing the successive lexicographic stages to compact linear
systems. The recent survey of \citet{BBJ+23a} leaves three questions
about the nucleon open. Open Problem~5, which the survey traces back to
\citet{FKFH98a}, asks whether the nucleon of a matching game admits a
polynomial-time combinatorial algorithm. Open Problem~7 asks for the
complexity of computing the nucleon of $b$-matching games and of
partitioned matching games, two separate classes.

\paragraph{Our contributions} We resolve all three open problems regarding the nucleon.
\begin{enumerate}
\item \emph{Matching games (Open Problem~5).} For matching games with
 nonnegative rational weights, we give a polynomial-time algorithm that
 uses only maximum-weight matching, shortest paths, minimum-cost flow, and
 exact rational arithmetic (Theorem~\ref{thm:algorithm}). It returns both
 a nucleon allocation and a compact description of all nucleon
 allocations.
\item \emph{$b$-matching games (Open Problem~7, first part).} Computing
 even one nucleon allocation is NP-hard when every edge has weight one and
 every vertex capacity is at most three (Theorem~\ref{thm:hardness}).
\item \emph{Partitioned matching games (Open Problem~7, second part).}
 Computing even one nucleon allocation is NP-hard when every edge has
 weight one and each player owns at most three vertices
 (Theorem~\ref{thm:partitioned}).
\end{enumerate}

{For the algorithm, we retain the structural reduction of
\citet{FKFH98a} and replace its linear-programming feasibility steps with
graph algorithms. We restate their compact-stage description in our
notation and prove only the simplified bound that is used by our new
combinatorial routine. The running time is polynomial in the binary input
length; we do not claim a strongly polynomial bound.}

Both hardness proofs add two players joined by a disjoint unit-weight
edge. Their combined payoff equals one precisely when the original game
has a nonempty core. Thus an exact allocation reveals a hard decision
without any additional coalition optimization. {Unit weights count
two-way exchanges in kidney exchange. Each exchange yields two transplants,
so multiplying all values and allocations by two gives transplant counts
without changing any ratio.
Theorem~\ref{thm:partitioned} therefore implies that exact nucleon-based
allocations for international kidney exchange cannot be computed in
polynomial time unless $\mathrm{P}=\mathrm{NP}$, even if only pairwise
exchanges are allowed and each country owns at most three patient--donor
pairs.} With capacity one, or with
one vertex per player, both classes reduce to matching games, which are
covered by Theorem~\ref{thm:algorithm}. The boundary cases of capacities
at most two and of at most two vertices per player remain open.

For capacities at most two, we sharpen this frontier. Building on the
linear-subspace-avoidance framework of \citet{EE26a}, we show that exact
nucleon computation for $b\leq2$ is polynomial-time Turing reducible to
\SNZC{} (Theorem~\ref{thm:b2-snzc}). \rev{The unrestricted minimum-excess
optimization oracle used by our separation routine is polynomially
equivalent to \SNZC{}. The resulting algorithm is conditional, since the
complexity of \SNZC{} is itself open. We establish no reverse reduction
for the restricted queries arising in actual nucleon stages or for the
two full computation problems.}

Section~\ref{sec:related} places these results in the literature.
Section~\ref{sec:model} defines the model and the nucleon.
Section~\ref{sec:algorithm} gives the combinatorial algorithm, and
Section~\ref{sec:hardness} proves the two hardness results.
{\rev{Section~\ref{sec:capacity-two} relates capacity-two stage separation
to the unrestricted minimum-excess oracle and derives a conditional
algorithm from the framework of \citet{EE26a}.}}

\section{Related work}\label{sec:related}

\paragraph{Matching games}
The assignment game of \citet{ShSh71a} models a two-sided market with
transferable utility. It is the bipartite special case of a matching
game, and its core is nonempty. \citet{DIN99a} study the core of
combinatorial optimization games, including matching games. For an
ordinary matching game, nonnegativity and the edge inequalities
$x_i+x_j\geq w(ij)$ imply every coalition inequality: one sums them
over a maximum-weight matching of the coalition. \citet{BKP12a} give
polynomial-time algorithms for testing core nonemptiness and finding a
core allocation, as well as an $O(n^4)$-time algorithm for the nucleolus
when the core is nonempty.

This simple description does not extend to $b$-matching games, because
several chosen edges can share a vertex. \citet{BKPW18a} show that core
membership and separation are polynomial-time solvable when every
capacity is at most two; \citet{SV25a} give a corrected separation
argument and a compact extended formulation. \citet{BKPW18a} also prove
that core membership is
coNP-complete for unit weights and capacity three, even on bipartite
graphs. \citet[Theorem~1]{BBK+25a} prove that core nonemptiness is
coNP-hard for unit weights and capacities at most three. The same paper
introduces partitioned matching games as a model of international kidney
exchange. It shows that core nonemptiness is polynomial-time decidable
when each player owns at most two vertices, and coNP-hard for unit
weights when players own up to three vertices. Our hardness proofs use
these two nonemptiness results. They must distinguish an empty core from
a nonempty one, so hardness of core membership alone would not suffice.

For the Shapley value of matching games, \citet{AzKe14a} show
\#P-completeness in the unweighted case and give a fully polynomial-time
randomized approximation scheme.

\paragraph{The nucleon and multiplicative stability}
\citet{FaKe93a} study multiplicative relaxations of the core in
combinatorial games. \citet{FKFH98a} introduce the nucleon and prove its
polynomial-time computability in matching games. Their algorithm
discards coalition constraints whose payoff is already fixed and
represents each remaining stage by polynomially many constraints
computed by weighted matching. Testing the feasibility of such a stage
is a linear program. We use the same representation and supply a
combinatorial feasibility routine for the resulting systems of
singleton and pair bounds with an efficiency equality.

Multiplicative core approximation addresses a related, weaker goal.
\citet{Vazi22a} gives a polynomial-time algorithm for a $2/3$-approximate
core allocation in ordinary matching games; the factor is optimal in the
worst case. \citet{XLLF25a} obtain a $2/3$ guarantee in a multiple-partners
model that permits repeated use of an edge. Their convention differs
from ours, which permits each edge at most once. In either setting,
a prescribed approximation factor does not optimize the entire
lexicographic ratio vector.

\paragraph{The nucleolus}
The nucleon is the multiplicative counterpart of the nucleolus of
\citet{Schm69a}. For the additive objective, \citet{SoRa94a} give an
$O(n^4)$-time algorithm for assignment games, and \citet{KePa03a}
give a polynomial-time algorithm for uniform matching games.
\citet{Hard17a} obtains an $O(n^4)$-time combinatorial algorithm for
uniform matching games. \citet{KPT20a} prove that the nucleolus of
a general weighted matching game can be computed in polynomial time
using the ellipsoid method. A combinatorial algorithm for this additive
objective is a separate open question.

For capacitated games, \citet{KTZ24a} prove that deciding whether a given
allocation is the nucleolus is NP-hard for unit weights and capacity
three, even on bipartite graphs of maximum degree seven, and give
polynomial-time algorithms for restricted capacity-two cases. This
additive hardness result does not classify the multiplicative objective.
More recently, \citet{EE26a} reduce later-stage nucleolus separation to
linear-subspace-avoidance optimization and relate its capacity-two case
to \SNZC. Section~\ref{sec:capacity-two} adapts that oracle connection to
the multiplicative stages of the nucleon.
Our three theorems resolve Open Problem~5 and both parts of Open
Problem~7 in the survey of \citet{BBJ+23a}.

\section{Preliminaries}\label{sec:model}

The three-firm example illustrates both the model and its allocation
problem. A coalition's value is the best set of partnerships it can form,
while its payoff is the sum of the amounts assigned to its members.
We now define these quantities.

\subsection{Matching games and allocations}

Let $G=(N,E)$ be a finite simple undirected graph with nonnegative
rational edge weights $w:E\to\R_{\geq0}$, and let $n=|N|$. The vertices
are the \textbf{players}, and a subset of players is a
\textbf{coalition}. For a set of edges $M$, write $w(M)=\sum_{e\in M}w(e)$,
and let $V(M)$ denote the vertices incident to its edges. A
\textbf{matching} is a set of edges with no common endpoints. The
\textbf{matching game} defined by $(G,w)$ has value function
\begin{equation}\label{eq:matching-value}
 v(S)=\max\{w(M): M\text{ is a matching in }G[S]\},
 \qquad S\subseteq N.
\end{equation}
The empty matching is allowed, so $v(\emptyset)=0$ and
$v(\{i\})=0$ for every player $i$.

A capacity function $b:N\to\mathbb{Z}_{\geq1}$ permits several
partnerships at a vertex. A \textbf{$b$-matching} is a subset $M\subseteq E$
in which at most $b_i$ edges meet each vertex $i$. The corresponding
\textbf{$b$-matching game} defines $v(S)$ by maximizing $w(M)$ over
the $b$-matchings in $G[S]$. Each edge may be used at most once.
The choice $b_i=1$ for every player recovers a matching game.

In a \textbf{partitioned matching game}, the players own disjoint sets of
vertices. It is given by a graph $H=(V,E)$ with weights
$w:E\to\R_{\geq0}$ and a partition of $V$ into sets $V_p$, $p\in N$;
player $p$ owns the vertices in $V_p$. The value of a coalition $S$ is
the maximum weight of a matching in the subgraph of $H$ induced by
$\bigcup_{p\in S}V_p$. The \textbf{width} of the game is
$\max_{p\in N}|V_p|$. Width one recovers a matching game. Unlike in the
other two classes, a single player can have positive value.
A game of any of these three kinds is \textbf{uniform} when every edge has
weight one.

Write $W=v(N)$ for the value of the \textbf{grand coalition}.
For a vector $x\in\R^N$, let $x(S)=\sum_{i\in S}x_i$.
An \textbf{imputation} is a nonnegative efficient allocation:
\begin{equation}\label{eq:imputation}
 x_i\geq0\quad(i\in N),\qquad x(N)=W.
\end{equation}
We write $I$ for the set of imputations. In matching and $b$-matching
games, singleton values are zero, so nonnegativity is individual
rationality. Following \citet{FKFH98a}, we use the same set $I$ for every
game with nonnegative values; a singleton with positive value is then
compared through its ratio like any other coalition. The
\textbf{core} is
\begin{equation}\label{eq:core}
 \core(N,v)=\{x\in I:
                   x(S)\geq v(S)\text{ for every }S\subseteq N\}.
\end{equation}
Thus a core allocation pays every coalition at least its outside option.

\subsection{The nucleon}

The nucleon ranks coalitions by the fraction of their value they receive.
Coalitions with value zero need no ratio comparison, since
nonnegative payoffs already meet their core constraints.

\begin{definition}[Nucleon]\label{def:nucleon}
For an imputation $x$, let $\theta(x)$ be the list of the ratios
\[
 \frac{x(S)}{v(S)}
 \qquad(\emptyset\ne S\subsetneq N,\ v(S)>0),
\]
with one entry for each such coalition, sorted in nondecreasing order.
The \textbf{nucleon} $\nucleon(N,v)$ is the set of imputations $x$ for
which $\theta(x)$ is lexicographically maximal.
\end{definition}

Lexicographic maximization first improves the smallest entry. Subject to
that entry being optimal, it improves the second, and so on.
Equivalently, zero-valued proper coalitions can be assigned ratio
$+\infty$~\citep{FKFH98a}. The grand coalition is excluded because its
ratio is always one and conveys no comparison between allocations.
If no proper coalition has positive value, the nucleon is the entire
imputation set.

When positive-valued proper coalitions exist, the first optimization is
\begin{equation}\label{eq:first-stage}
 \max\ \alpha\quad\text{subject to }\eqref{eq:imputation},\qquad
 x(S)\geq\alpha v(S)
 \quad(\emptyset\ne S\subsetneq N,\ v(S)>0).
\end{equation}
This is the \textbf{multiplicative least-core} problem; we write
$\alpha_1(N,v)$ for its optimal value, and set $\alpha_1(N,v)=+\infty$
when no proper coalition has positive value. The nucleon refines its optimal
set by further ratio comparisons. Unlike the nucleolus, the nucleon can
contain several allocations.

{The first stage gives the nucleon two guarantees. If the core is
nonempty, some imputation gives every coalition its full value, so
$\alpha_1(N,v)\geq1$ and every nucleon allocation lies in the core.
\rev{When positive-valued proper coalitions exist, every nucleon allocation
gives each such coalition at least the fraction $\alpha_1(N,v)$ of its
value, and no imputation guarantees them a larger fraction. If $W>0$, the
best common guarantee for all coalitions, including $N$, is
$\min\{1,\alpha_1(N,v)\}$, with $\min\{1,+\infty\}=1$ when no proper
coalition has positive value.} For matching games $\alpha_1(N,v)\geq2/3$
\citep{FaKe93a}, so every nucleon allocation lies in the $2/3$-approximate
core.}

\begin{figure}[t]
\centering
\begin{tikzpicture}
 % (a) unit triangle
 \begin{scope}
  \node[player] (t1) at (90:1.15) {$1$};
  \node[player] (t2) at (210:1.15) {$2$};
  \node[player] (t3) at (330:1.15) {$3$};
  \draw[thick] (t1) -- node[wt] {$1$} (t2)
                    -- node[wt] {$1$} (t3)
                    -- node[wt] {$1$} (t1);
  \node[payoff,above=1pt] at (t1.north) {$1/3$};
  \node[payoff,left=1pt]  at (t2.west)  {$1/3$};
  \node[payoff,right=1pt] at (t3.east)  {$1/3$};
  \node[font=\small] at (0,-1.55) {(a) The unit triangle};
 \end{scope}
 % (b) triangle plus a disjoint edge
 \begin{scope}[xshift=6.2cm]
  \node[player] (u1) at (90:1.15) {$1$};
  \node[player] (u2) at (210:1.15) {$2$};
  \node[player] (u3) at (330:1.15) {$3$};
  \draw[thick] (u1) -- node[wt] {$1$} (u2)
                    -- node[wt] {$1$} (u3)
                    -- node[wt] {$1$} (u1);
  \node[payoff,above=1pt] at (u1.north) {$2/5$};
  \node[payoff,left=1pt]  at (u2.west)  {$2/5$};
  \node[payoff,right=1pt] at (u3.east)  {$2/5$};
  \node[player] (a) at (2.6,0.95) {$a$};
  \node[player] (b) at (2.6,-0.75) {$a'$};
  \draw[very thick,corange] (a) -- node[wt,text=black] {$1$} (b);
  \node[payoff,right=1pt] at (a.east) {$2/5$};
  \node[payoff,right=1pt] at (b.east) {$2/5$};
  \node[font=\small,align=center] at (1.0,-1.55)
       {(b) The triangle with a disjoint edge $aa'$};
 \end{scope}
\end{tikzpicture}
\caption{Two small matching games; edge labels are weights and blue
numbers are the unique nucleon allocations. (a)~Every pair receives
$2/3$ of its value, and the core is empty (Example~\ref{ex:triangle}).
(b)~After a disjoint unit-weight edge is added, the first stage gives
the triangle's pairs and the edge $aa'$ the same ratio $4/5$. The edge
total $x_a+x_{a'}=4/5<1$ records that the triangle has an empty core
(Section~\ref{sec:hardness}).}
\label{fig:triangle}
\end{figure}

\begin{example}[A triangle and a single partnership]\label{ex:triangle}
In the unit-weight triangle of Figure~\ref{fig:triangle}(a), $W=1$ and
each pair has value one. The three pair payoffs sum to $2x(N)=2$, so
their smallest ratio is at most $2/3$. The allocation $(1/3,1/3,1/3)$
attains this bound. At an optimum all three pair sums must equal $2/3$,
which fixes this allocation uniquely. It is therefore the nucleon,
although the core is empty.

For two players joined by a unit-weight edge, both proper coalitions
have value zero. Every nonnegative split of one belongs to the nucleon.
This explains why our algorithm returns a description of the full set
as well as an allocation.
\end{example}

The nucleon is never empty; Lemma~\ref{lem:stages} below shows that it
is a polytope. We study the task of returning an exact rational member of
the nucleon from $(G,w)$, from $(G,b,w)$, or from a partitioned instance
$(H,w,(V_p)_{p\in N})$. All input numbers are encoded in binary. A
compact description means a system with polynomially many constraints
and polynomial encoding length. Our combinatorial algorithm uses graph
routines and exact arithmetic, without the ellipsoid method or a
general-purpose linear programming solver.

\section{A combinatorial algorithm for matching games}\label{sec:algorithm}

The main obstacle is not finding the value of a coalition. It is handling
the exponentially many coalitions while retaining the efficiency
equality. We follow the stages of \citet{FKFH98a}: each stage maximizes
the smallest ratio among the coalitions whose payoff is not yet fixed.
For matching games, each stage can be written with one bound per pair or
singleton of players. We show that such bounds, together with efficiency,
can be handled by shortest paths and minimum-cost flow.

\begin{theorem}\label{thm:algorithm}
For matching games with nonnegative rational edge weights, a nucleon
allocation and a compact description of the entire nucleon can be
computed in polynomial time using maximum-weight matching, shortest
paths, minimum-cost flow, and exact rational arithmetic.
\end{theorem}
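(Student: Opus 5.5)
We follow the stage structure of \citet{FKFH98a} and replace each linear-programming step by a graph computation. The nucleon is obtained by a sequence of stages. At stage $k$ we have a set $F_{k-1}$ of coalitions whose payoffs are already fixed. The stage maximizes $\alpha$ subject to \eqref{eq:imputation}, the fixed equalities, and $x(S)\geq\alpha v(S)$ for every unfixed coalition $S$ with $v(S)>0$. We then add to $F_k$ those coalitions whose constraint is tight at every optimum. We stop when the affine hull of the feasible set is determined, so that no unfixed positive-valued coalition remains.

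The first step is the reduction to pairs and singletons. For a matching game and a fixed $\alpha$, each coalition constraint is implied by the edge constraints $x_i+x_j\geq\alpha w(ij)$ together with $x_i\geq0$: one sums them over a maximum-weight matching of $G[S]$. The complication is that edges inside fixed coalitions carry equalities rather than lower bounds. Following the restated compact-stage description of \citet{FKFH98a}, which we take as available (Lemma~\ref{lem:stages}), each stage can be written as a system of the following form:
\begin{itemize}
\item singleton bounds $x_i\geq\ell_i$ or equalities $x_i=c_i$;
\item pair bounds $x_i+x_j\geq\alpha w(ij)+\beta_{ij}$ or pair equalities;
\item the single efficiency equality $x(N)=W$.
\end{itemize}
Every right-hand side is computed from earlier stage values by maximum-weight matching. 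Linear dependencies among the fixed constraints are tracked, for example by Gaussian elimination over the rationals. Since each stage fixes at least one new independent direction, there are at most $n$ stages.

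The core new ingredient is a combinatorial feasibility test for one such system at a given $\alpha$. The constraints $x_i+x_j\geq c_{ij}$ and $x_i\geq \ell_i$ define a covering polyhedron. Minimizing $x(N)$ over it is the dual of a maximum-weight fractional $b$-matching with singleton loops. By bidirected-graph doubling, this is a bipartite problem, and hence a minimum-cost flow problem. Pair equalities and singleton equalities are handled by substituting out variables. A pair equality $x_i+x_j=c$ turns $x_j$ into $c-x_i$, which produces difference constraints $x_i-x_k\geq\cdot$ alongside sum constraints. We handle this mix by the standard signed-graph trick, using variables $x_i$ and $-x_i$, where feasibility of the difference part is a shortest-path negative-cycle check.

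The stage then reads: the system is feasible for $\alpha$ if and only if the minimum of $x(N)$ over it is at most $W$. Note that the covering system is upward closed in the unfixed coordinates, so attaining efficiency exactly is never harder than the minimum being at most $W$. The minimum of $x(N)$ is a piecewise-linear convex function of $\alpha$. So the optimal $\alpha_k$ is found either by parametric minimum-cost flow, or by binary search to sufficient precision followed by rounding with continued fractions. Polynomial bounds on the denominators come from the bounded encoding of the vertices of the stage polytope. To identify the newly fixed constraints, we test each candidate pair or singleton constraint: we ask whether it can be made slack while staying at optimum $\alpha_k$, by re-solving with that constraint strengthened by $\varepsilon$ symbolically, or equivalently by reading complementary slackness from an optimal flow and its potentials.

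At termination, the final system of equalities and bounds is the compact description of the full nucleon. Any feasible point of it, for instance the potentials from the last flow, is a nucleon allocation. I expect the main obstacle to be the mixed system after substitution of pair equalities, which can create odd-cycle, non-bipartite structure. Reducing its feasibility together with efficiency to a single minimum-cost flow, rather than a general LP, needs care. The first thing to try is the bidirected doubling $x_i\mapsto(x_i^+,x_i^-)$ with $x_i=(x_i^+-x_i^-)/2$. This gives half-integral-style solutions and turns every constraint into a difference constraint with one extra efficiency row. That row is handled by a parametric shortest-path or flow argument.
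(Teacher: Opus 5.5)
\textbf{Gap: how the efficiency equality is imposed.} You drop $x(N)=W$ and accept $\alpha$ whenever $\min x(N)\leq W$. Your justification is that the system is upward closed in the unfixed coordinates. That holds for a pure covering system, as at stage~1. At later stages, however, the system contains the equalities $x_i+x_j=c(ij)$ for fixed pairs. Raising $x_i$ then forces lowering $x_j$, which can violate the lower bounds on $x_j$. For example, if $\{1,2\}$ and $\{2,3\}$ are fixed pairs, then $x_1+x_2+x_3=c(12)+c(23)-x_2$, so a rising lower bound on $x_2$ pushes the \emph{largest} attainable total down. When every player lies in such an equality structure, the feasible totals of the relaxed system form a bounded interval $[L,U]$. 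Nothing in your argument excludes $L\leq U<W$ for some $\alpha>\alpha_k$; your test would then accept $\alpha$ and the binary search would overshoot $\alpha_k$. Already the abstract system $x_1+x_2=x_2+x_3=2$, $x_2\geq 3/2$, $x\geq0$ with $W=13/5$ has minimum total $2\leq W$ but no solution of total $W$. The correct criterion is $L\leq W\leq U$, which also needs the maximum total. That is exactly the step your last paragraph leaves to an unspecified ``parametric'' argument. Likewise, you explain how to check feasibility of the doubled difference system, but not how to optimize $x(N)$ over it once equalities are present, nor why optimal doubled potentials correspond to genuine allocations.

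\textbf{How the paper fills this.} Lemma~\ref{lem:pair} supplies the missing step. It adds the redundant bounds $x_i\leq W$, so that $U$ is finite, and writes every bound and equality as two difference constraints on nodes $i^+,i^-$. The reflection $\bar p(i^+)=-p(i^-)$, $\bar p(i^-)=-p(i^+)$ maps the constraint system to itself and preserves $\sum_i(p(i^+)-p(i^-))$. Averaging any feasible potential with its reflection therefore gives one of the form $(x,-x)$ with the same total. This symmetrization, rather than half-integrality, makes the doubled system equivalent to the original one. Then $2U$ and $-2L$ are the maxima of $\pm\sum_i(p(i^+)-p(i^-))$ over feasible potentials. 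LP duality turns these into two uncapacitated minimum-cost flow problems, and a solution of total $W$ is a convex combination of the two extreme solutions.

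\textbf{Smaller points.} The offset in a pair bound is not a constant $\beta_{ij}$ but $\mu_A(\alpha)$ from \eqref{eq:mu}, which must be recomputed by a matching at each tested $\alpha$. An arbitrary optimal flow need not be strictly complementary, so reading tight constraints off it does not certify that a sum is constant on $P_k$. The paper instead computes the exact minimum and maximum of every pair and singleton sum over $P_k$ by binary search. Finally, the denominator bound is not automatic, since $P_k$ is defined through the previously computed levels. The paper lifts to the polyhedron $Q_k$ in $(x,a)$ and views $P_k$ as a lexicographically optimal face (Lemma~\ref{lem:precision}).
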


The theorem provides a graph-based implementation of the previously
known polynomial-time computation. Section~\ref{sec:stages} describes
the stages for an arbitrary nonnegative game, and
Section~\ref{sec:compact} shows that for matching games every stage
involves only pair and singleton sums. Section~\ref{sec:pair} solves
such systems with a prescribed total, Section~\ref{sec:exact} makes every
optimization exact, and Section~\ref{sec:proof} assembles
Algorithm~\ref{alg:nucleon}.

\subsection{Successive stages}\label{sec:stages}

Throughout this subsection $(N,v)$ is any game with $v\geq0$ and
$W=v(N)>0$. A coalition $S$ is \textbf{fixed} on a set $P\subseteq I$
if $x(S)$ takes the same value at every $x\in P$. The stages form a
decreasing sequence $P_0\supseteq P_1\supseteq\cdots$ of polytopes.
Start with $P_0=I$. For $k=1,2,\ldots$, let $U_{k-1}$ be the set of
proper coalitions $S$ with $v(S)>0$ that are not fixed on $P_{k-1}$.
If $U_{k-1}=\emptyset$, stop and set $K=k-1$. Otherwise let
\begin{align}
 \alpha_k&=\max_{x\in P_{k-1}}\ \min_{S\in U_{k-1}}\frac{x(S)}{v(S)},
 \label{eq:stage-value}\\
 P_k&=\{x\in P_{k-1}: x(S)\geq\alpha_k v(S)\ \text{for all }S\in U_{k-1}\}.
 \label{eq:stage-set}
\end{align}
The maximum exists because $P_{k-1}$ is a nonempty polytope. This
\emph{accelerated} scheme skips every comparison whose outcome is
already decided; it is the multiplicative analogue of the sequential
programs for the nucleolus \citep{MPS79a}. For $1\leq k\leq K+1$ and
$\alpha\geq0$, let
\begin{equation}\label{eq:stage-region}
 R_k(\alpha)=\{x\in P_{k-1}: x(S)\geq\alpha v(S)\ \text{for all }
              S\in U_{k-1}\},
\end{equation}
Thus $R_{K+1}(\alpha)=P_K$ for every $\alpha$, while for $k\leq K$ we
have $R_k(\alpha)\neq\emptyset$ exactly when $\alpha\leq\alpha_k$, and
$P_k=R_k(\alpha_k)$.

\begin{lemma}\label{lem:stages}
Set $\alpha_0=0$. The scheme has the following properties.
\begin{enumerate}
\item[(a)] $R_k(\alpha_{k-1})=P_{k-1}$ for $1\leq k\leq K+1$; in
 particular $\alpha_k\geq\alpha_{k-1}$ for $1\leq k\leq K$.
\item[(b)] For $1\leq k\leq K$, some coalition in $U_{k-1}$ is fixed on
 $P_k$. Hence $\dim P_k<\dim P_{k-1}$, and the scheme stops after
 $K\leq n-1$ stages.
\item[(c)] $\nucleon(N,v)=P_K$.
\end{enumerate}
\end{lemma}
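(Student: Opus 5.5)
We prove the three parts in order, using only the definitions \eqref{eq:stage-value}--\eqref{eq:stage-region} and convexity of the polytopes $P_k$.

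\emph{Part (a).} For $k=1$ we have $R_1(0)=I=P_0$, since $x\geq0$ gives $x(S)\geq0$. For $k\geq2$ we show that every $S\in U_{k-1}$ already satisfies $x(S)\geq\alpha_{k-1}v(S)$ on $P_{k-1}$. Because $P_{k-1}\subseteq P_{k-2}$, a coalition not fixed on $P_{k-1}$ is not fixed on $P_{k-2}$, so $U_{k-1}\subseteq U_{k-2}$. The constraint defining $P_{k-1}$ therefore applies to $S$, giving $R_k(\alpha_{k-1})=P_{k-1}$. Since $P_{k-1}\neq\emptyset$, the maximum in \eqref{eq:stage-value} is at least $\alpha_{k-1}$, so $\alpha_k\geq\alpha_{k-1}$. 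Nonemptiness of each $P_k$ follows inductively, because $P_k$ is the set of maximizers of a continuous function on a nonempty compact set.

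\emph{Part (b).} This is the main step. Suppose that no $S\in U_{k-1}$ is fixed on $P_k$. Then for each such $S$ there is $y^S\in P_k$ with $y^S(S)>\alpha_k v(S)$; otherwise $x(S)=\alpha_kv(S)$ on all of $P_k$ and $S$ would be fixed. Let $\bar y$ be the average of the finitely many $y^S$. By convexity $\bar y\in P_k$, and $\bar y(S)>\alpha_kv(S)$ for every $S\in U_{k-1}$, since each term is at least $\alpha_kv(S)$ and one is strict. Hence $\min_{S\in U_{k-1}}\bar y(S)/v(S)>\alpha_k$ with $\bar y\in P_{k-1}$, contradicting the definition of $\alpha_k$. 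So some $S\in U_{k-1}$ is fixed on $P_k$ but not on $P_{k-1}$.

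This $S$ gives a linear functional $x\mapsto x(S)$ that is constant on $\operatorname{aff}P_k$ but not on $\operatorname{aff}P_{k-1}$, so $\dim P_k<\dim P_{k-1}$. Since $\dim P_0=\dim I\leq n-1$ and dimensions are nonnegative, we obtain $K\leq n-1$. Termination also follows from finiteness, because $U_k\subsetneq U_{k-1}$.

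\emph{Part (c).} We compare $\theta$ with the stage values. Fix $z\in P_K$ and any imputation $x\notin P_K$; we show $\theta(z)>_{\mathrm{lex}}\theta(x)$. We may assume $x\in I$. Let $k$ be minimal with $x\in P_{k-1}\setminus P_k$, so some $T\in U_{k-1}$ has $x(T)<\alpha_kv(T)$.

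Every coalition $S\notin U_{k-1}$ with $v(S)>0$ was fixed at some earlier point. The coalitions fixed on $P_{k-1}$ have equal ratios for $x$ and $z$, because both vectors lie in $P_{k-1}$.

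On $U_{k-1}$, all of $z$'s ratios are at least $\alpha_k$, because $z\in P_K\subseteq P_k$. The vector $x$, by contrast, has a ratio strictly below $\alpha_k$.

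The common ratios of the fixed coalitions may interleave with these values. To handle this, we use part (a) and induction: every coalition fixed on $P_{k-1}$ but in $U_{j-1}$ for some $j<k$ has ratio at least $\alpha_j$ on $P_j$, so it is at least $\alpha_1$.

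This interleaving is the delicate point. We handle it by a standard multiset comparison. Write both sorted lists as the merge of a common multiset $F$ (the fixed coalitions) with the multisets $A_z$ and $A_x$ (the ratios on $U_{k-1}$). Since $\min A_z\geq\alpha_k>\min A_x$, compare the two merged lists entry by entry. At the first position where they differ, the list for $x$ takes the element $\min A_x$ while the list for $z$ takes an element that is at least $\min\{\alpha_k,\text{next element of }F\}$, which is larger. Hence $\theta(z)>_{\mathrm{lex}}\theta(x)$.

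Finally, all points of $P_K$ have identical $\theta$. Every coalition with $v(S)>0$ is fixed on $P_K$ because $U_K=\emptyset$, so all ratios coincide. Therefore $P_K$ is exactly the set of lexicographic maximizers, which gives $\nucleon(N,v)=P_K$.
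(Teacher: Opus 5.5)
Your proof is correct and follows the paper's argument. Parts (a) and (b) match it step for step: the inclusion $U_{k-1}\subseteq U_{k-2}$, then averaging the witnesses $y^S\in P_k$ to contradict maximality of $\alpha_k$. In (c), your merge with the common multiset $F$ is the paper's observation that adding common entries does not change a lexicographic comparison, applied directly to $z\in P_K$ rather than stage by stage. The side remark bounding the fixed ratios below by $\alpha_1$ is not needed: your first-differing-position argument already handles the interleaving, once you note that any entries of $F$ equal to $\min A_x$ lie in the common prefix.
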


\begin{proof}
(a) For $k=1$ the claim holds because $x\geq0$. For $k\geq2$, a
coalition fixed on $P_{k-2}$ remains fixed on its subset $P_{k-1}$, so
$U_{k-1}\subseteq U_{k-2}$. By \eqref{eq:stage-set}, every
$x\in P_{k-1}$ satisfies $x(S)\geq\alpha_{k-1}v(S)$ for all
$S\in U_{k-2}$, and in particular for all $S\in U_{k-1}$.

(b) Suppose instead that every $S\in U_{k-1}$ has a point
$x^S\in P_k$ with $x^S(S)>\alpha_k v(S)$. The average $\bar x$ of these
points lies in the convex set $P_k\subseteq P_{k-1}$, and
$\bar x(S)>\alpha_k v(S)$ for every $S\in U_{k-1}$. This contradicts
the maximality of $\alpha_k$ in \eqref{eq:stage-value}. Thus some
$S\in U_{k-1}$ satisfies $x(S)=\alpha_k v(S)$ throughout $P_k$. The
affine function $x\mapsto x(S)$ is then constant on the affine hull of
$P_k$ but not on that of $P_{k-1}$, so the dimension drops. Since
$\dim P_0\leq n-1$ and every coalition is fixed on a single point, there
are at most $n-1$ stages.

(c) We use one property of sorted lists: adding the same entries to two
lists does not change which of them is lexicographically larger. Indeed,
compare the lists through their counting functions
$t\mapsto|\{\text{entries}\leq t\}|$; the first list is larger exactly
when, at the smallest $t$ where the two functions differ, it has fewer
entries. Adding common entries adds the same amount to both counting
functions.

Coalitions that are fixed on $P_{k-1}$ contribute the same ratios to
$\theta(x)$ for all $x\in P_{k-1}$. Hence, on $P_{k-1}$, comparing
$\theta(x)$ and $\theta(y)$ amounts to comparing the sorted ratio lists
of the coalitions in $U_{k-1}$. Their first entries are the minima in
\eqref{eq:stage-value}. So every $x\in P_{k-1}\setminus P_k$ is beaten
by every $y\in P_k$, and the lexicographic maximizers of $\theta$ over
$P_{k-1}$ are exactly those over $P_k$. On $P_K$ every coalition with
positive value is fixed, so $\theta$ is constant there. Consequently
$\nucleon(N,v)$, the set of maximizers over $P_0=I$, equals $P_K$.
\end{proof}

\subsection{Compact stages for matching games}\label{sec:compact}

From now on $(N,v)$ is a matching game with $n\geq3$ and $W>0$. Complete
the graph by adding zero-weight edges, so that every pair
$A=\{i,j\}$ has a weight $w(A)=w(ij)\geq0$; this does not change $v$.
For a singleton $A$, put $w(A)=0$. Consider $1\leq k\leq K+1$. Let
$E_{k-1}$ be the set of pairs fixed on $P_{k-1}$, and let $F_{k-1}$ be
the set of players whose payoff is fixed on $P_{k-1}$. Write $c(A)$ for
the common value of $x(A)$ on $P_{k-1}$ when the pair or singleton $A$ is
fixed, and $c(i)=c(\{i\})$. The remaining pairs and singletons are
\textbf{unfixed}; we collect them in $\mathcal{A}_{k-1}$. The equalities
known so far define
\begin{equation}\label{eq:domain}
 D_{k-1}=\Bigl\{x\geq0:\ \begin{aligned}
 &x(N)=W,\quad x(e)=c(e)\ (e\in E_{k-1}),\\
 &x_i=c(i)\ (i\in F_{k-1})\end{aligned}\Bigr\}.
\end{equation}
For $A\in\mathcal{A}_{k-1}$ and $\alpha\geq0$, let
\begin{equation}\label{eq:mu}
 \mu_A(\alpha)=\max\Bigl\{\sum_{e\in M}\bigl(\alpha w(e)-c(e)\bigr):
 M\subseteq E_{k-1}\text{ a matching avoiding }A\Bigr\},
\end{equation}
where $M$ \emph{avoids} $A$ when $V(M)\cap A=\emptyset$.
The empty matching shows $\mu_A(\alpha)\geq0$, and one maximum-weight
matching computation in the graph of fixed pairs on $N\setminus A$
evaluates $\mu_A(\alpha)$. Note that $P_{k-1}\subseteq D_{k-1}$, because
every equality defining $D_{k-1}$ holds on $P_{k-1}$.

{The structural reduction in this subsection is due to
\citet[Propositions~4.1--4.2]{FKFH98a}. We record it in our notation because
it is the interface used by the combinatorial algorithm. Our only
modification is the explicit bound
\begin{equation}\label{eq:compact-bound}
 x(A)\geq\alpha w(A)+\mu_A(\alpha),
 \qquad A\in\mathcal A_{k-1},
\end{equation}
which avoids constructing and re-optimizing over an auxiliary coalition.

\begin{proposition}[Compact stages; \cite{FKFH98a}]\label{prop:compact}
Let $(N,v)$ be a matching game with $n\geq3$ and $W>0$, and let
$1\leq k\leq K+1$.
\begin{enumerate}
\item[(a)] Every coalition fixed on $P_{k-1}$ has the same payoff at all
points of $D_{k-1}$.
\item[(b)] For every $\alpha\geq\alpha_{k-1}$,
\[
R_k(\alpha)=\{x\in D_{k-1}:x(A)\geq \alpha w(A)+\mu_A(\alpha)
\text{ for every }A\in\mathcal A_{k-1}\}.
\]
\end{enumerate}
Moreover, $\nucleon(N,v)=D_K$.
\end{proposition}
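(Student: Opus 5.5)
We proceed by induction on $k$. Statement (b) for $k$ will rest on (a) for $k$, and (b) for $k$ will give (a) for $k+1$. The one combinatorial tool is a \emph{decomposition} of a coalition. For $S$ with $v(S)>0$, take a maximum-weight matching $M$ of $G[S]$. Then $S$ is the disjoint union of the pairs of $M$ and the singletons of $S\setminus V(M)$, so $x(S)$ is the sum of $x$ over these pieces, and $v(S)$ is the sum of $w$ over them because singletons have $w=0$. If every piece is fixed on $P_{k-1}$, then $S$ is fixed. Hence every $S\in U_{k-1}$ has at least one unfixed piece $A\in\mathcal A_{k-1}$ in any such decomposition.

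\emph{Inclusion $\subseteq$ in (b).} Let $x\in R_k(\alpha)$, so $x\in P_{k-1}\subseteq D_{k-1}$. Fix $A\in\mathcal A_{k-1}$ and a matching $M\subseteq E_{k-1}$ avoiding $A$, and put $S=A\cup V(M)$. Then $x(S)=x(A)+c(M)$ varies on $P_{k-1}$ because $x(A)$ does, so $S$ is unfixed. If $S$ is proper with $v(S)>0$, then $S\in U_{k-1}$, and
\[
 x(A)+c(M)\ \geq\ \alpha v(S)\ \geq\ \alpha\bigl(w(A)+w(M)\bigr).
\]
Maximizing over $M$ gives \eqref{eq:compact-bound}. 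If $v(S)=0$, then $w(A)=w(M)=0$ and $x(A)\geq0\geq-c(M)$, so the bound holds trivially.

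The case $S=N$ is the step I expect to be the main obstacle. Here efficiency only gives $W\geq\alpha W$, which fails when $\alpha>1$. My plan is to shrink $M$ by one fixed edge $e$, or to drop an unmatched vertex, so as to obtain a proper unfixed coalition $S'$. The loss $\alpha w(e)-c(e)$ should then be controlled, using that $n\geq3$ and that $c(e)\geq\alpha_{j}w(e)$ at the stage $j$ where $e$ was fixed.

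\emph{Inclusion $\supseteq$ in (b).} Let $x\in D_{k-1}$ satisfy all bounds \eqref{eq:compact-bound} at $\alpha$.
\begin{itemize}
\item[(i)] We show $x\in P_{k-1}$. By induction, $P_{k-1}=R_{k-1}(\alpha_{k-1})$ is described by $D_{k-2}$ and the bounds for $\mathcal A_{k-2}$ with the functions $\mu^{(k-2)}_A$. We have $D_{k-1}\subseteq D_{k-2}$. A pair or singleton that became fixed satisfies its bound with its value $c(A)$, since that value is attained on $P_{k-1}$. For a piece still unfixed we have $E_{k-2}\subseteq E_{k-1}$, and each term $\alpha w(e)-c(e)$ is nondecreasing in $\alpha$. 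Since $\alpha\geq\alpha_{k-1}$, this gives $\mu^{(k-1)}_A(\alpha)\geq\mu^{(k-2)}_A(\alpha_{k-1})$, so the new bound implies the old one.
\item[(ii)] For $S\in U_{k-1}$, decompose $S$ and pick an unfixed piece $A$. Let $M_f$ be the fixed edges of $M$ and $M_u$ the other unfixed pieces. Since $M_f$ avoids $A$,
\[
 x(S)\ \geq\ x(A)+c(M_f)+\sum_{B\in M_u}x(B)
 \ \geq\ \alpha w(A)+\alpha w(M_f)+\sum_{B\in M_u}\alpha w(B)
 \ =\ \alpha v(S).
\]
The second inequality uses $x(A)\geq \alpha w(A)+\mu_A(\alpha)\geq\alpha w(A)+\alpha w(M_f)-c(M_f)$, together with $x(B)\geq\alpha w(B)$ from the bound for $B$ and $\mu_B\geq0$.
\end{itemize}

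\emph{Statement (a) and the final claim.} For (a), let $S$ be fixed on $P_{k-1}$ with $v(S)>0$. Either $S$ was already fixed earlier, and induction applies, or $S\in U_{j-1}$ became fixed on $P_j$ for some $j<k$. In the second case, the inequality chain in (ii) at $\alpha=\alpha_j$ must be tight on all of $P_j$ for the decomposition of $S$. This forces every piece to satisfy $x(B)=\alpha_j w(B)+\mu_B$-type equalities. I will show that these pieces are therefore fixed pairs or singletons, so $x(S)$ is a sum of values $c(\cdot)$ and is constant on $D_{k-1}$.

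Finally, $\nucleon(N,v)=P_K$ by Lemma~\ref{lem:stages}(c), and $P_K\subseteq D_K$. For the converse, apply (b) with $k=K+1$, where $R_{K+1}(\alpha)=P_K$. By (a), every coalition of positive value has a constant payoff on $D_K$, so every pair with $w(A)>0$ is fixed. Any remaining unfixed $A$ has $w(A)=0$, and I will check that its bound $x(A)\geq\mu_A(\alpha)$ is already implied by the equalities defining $D_K$ (take $\alpha=\alpha_K$ and use $c(e)\geq\alpha_K w(e)$). This gives $D_K=P_K$.
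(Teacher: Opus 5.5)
\textbf{(b) is sound, and the case you flagged is vacuous.} Your argument for (b) is close to the paper's. The chain in (ii) is exactly the paper's sufficiency argument, and your inclusion $\subseteq$ matches its comparison with the constraints $x(S_A)\geq\alpha v(S_A)$. Step (i) recovers $x\in P_{k-1}$ from the stage-$(k-1)$ compact system and the monotonicity of $\mu_A$; this is a valid substitute for the paper's appeal to part (a). The case $S=N$ that you call the main obstacle cannot occur. You have just shown that $x(S)=x(A)+c(M)$ is not constant on $P_{k-1}$, whereas $x(N)=W$ is, so $S\neq N$. Your proposed repair would anyway have needed $c(e)\geq\alpha w(e)$, which fails once $\alpha$ exceeds the level at which $e$ was fixed.

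\textbf{The genuine gap is in (a).} You assume that a coalition $S\in U_{j-1}$ that becomes fixed on $P_j$ satisfies $x(S)=\alpha_jv(S)$ on all of $P_j$, so that your chain for $S$ is tight. This is false. In Figure~\ref{fig:triangle}(b), $S=\{1,2,3\}$ lies in $U_0$ and is fixed on $P_1$ with $x(S)=6/5$, while $\alpha_1v(S)=4/5$. It is fixed only because other constraints are tight: $x(S)=W-x_a-x_{a'}$ with $x_a+x_{a'}=4/5$. For such $S$ the chain is strict and says nothing about its pieces.

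The missing idea is to pass through the affine hull. The space $L(P_j)$ is spanned by the normals of the implicit equalities of the system defining $P_j$. These are $\chi^N$, the unit vectors of coordinates forced to zero, and the incidence vectors $\chi^T$ of coalition constraints $x(T)\geq\alpha_{j'}v(T)$ that hold with equality on $P_j$. Such a $T$ is tight at the stage $j^*$ at which it becomes fixed, since $\alpha_{j'}\leq\alpha_{j^*}\leq x(T)/v(T)=\alpha_{j'}$ on $P_{j^*}$. Your chain at $\alpha_{j^*}$ then forces every piece of a maximum matching of $G[T]$ to be a fixed pair or singleton. Hence
$L(P_j)\subseteq\operatorname{span}\{\chi^N,\chi^e\ (e\in E_j),\chi^{\{i\}}\ (i\in F_j)\}$,
so every coalition fixed on $P_j$ has constant payoff on $D_j\supseteq D_{k-1}$. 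This also covers the zero-valued fixed coalitions that your case split omits. The paper does not reprove this step; it imports (a) and $\nucleon(N,v)=D_K$ from \citet{FKFH98a}.

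\textbf{The closing step uses a false inequality.} It relies on $c(e)\geq\alpha_Kw(e)$, which fails in general: in Example~\ref{ex:path}, $c(\{1,2\})=5<\alpha_3\cdot 5$. The detour through (b) is unnecessary. Once (a) holds for $k=K+1$, every constraint defining $P_K$ involves a positive-valued proper coalition. Such a coalition is fixed on $P_K$ and hence constant on $D_K$, so $D_K\subseteq P_K$, which is how the paper concludes.
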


\begin{proof}
Part~(a), the identity $\nucleon(N,v)=D_K$, and a polynomial-size
representation of each $R_k(\alpha)$ are exactly the structural conclusions
of \citet[Propositions~4.1--4.2]{FKFH98a}, translated to the accelerated
stages of Section~\ref{sec:stages}. Their constraint associated with an
unfixed pair or singleton $A$ is
\[
 x(S_A)\geq \alpha v(S_A),\qquad
 S_A=A\cup V(M_A),
\]
where $M_A\subseteq E_{k-1}$ avoids $A$ and maximizes
$\sum_{e\in M_A}(\alpha w(e)-c(e))=\mu_A(\alpha)$.
On $D_{k-1}$, $x(S_A)=x(A)+c(M_A)$, while
$v(S_A)\geq w(A)+w(M_A)$. Hence their constraint implies
\eqref{eq:compact-bound}.

It remains only to show that the simpler bounds are jointly sufficient.
Let $x\in D_{k-1}$ satisfy \eqref{eq:compact-bound} for every unfixed
$A$, and let $S\in U_{k-1}$. Choose a maximum-weight matching $M$ in
$G[S]$. The edges of $M$ together with the unmatched vertices partition
$S$. Since $S$ is not fixed on $P_{k-1}$, at least one such part, say
$A$, is unfixed. The fixed edges of $M$ form a matching in $E_{k-1}$
avoiding $A$, and therefore
\[
 \sum_{e\in M\cap E_{k-1}}\bigl(c(e)-\alpha w(e)\bigr)
 \geq-\mu_A(\alpha).
\]
The part $A$ contributes at least $\mu_A(\alpha)$ by
\eqref{eq:compact-bound}. Every other unfixed matched edge $e$ contributes
$x(e)-\alpha w(e)\geq\mu_e(\alpha)\geq0$, and every unmatched vertex
contributes a nonnegative amount. Summing over the partition of $S$ gives
\[
 x(S)-\alpha v(S)\geq0.
\]
\rev{These inequalities also preserve all earlier stages. Indeed, every
coalition constrained at an earlier stage is either fixed on $P_{k-1}$,
in which case part~(a) preserves its payoff on $D_{k-1}$, or belongs to
$U_{k-1}$, in which case the new bound is at least as strong because
$\alpha\geq\alpha_{k-1}$. Thus $x\in P_{k-1}$.}
{Conversely, each constraint of \citet{FKFH98a} is itself one of these
coalition constraints, or holds trivially. Indeed, $x(S_A)=x(A)+c(M_A)$ on
$D_{k-1}\supseteq P_{k-1}$ and $x(A)$ is not constant on $P_{k-1}$, so
$S_A$ is not fixed on $P_{k-1}$; in particular $S_A\neq N$. Hence either
$S_A\in U_{k-1}$, or $v(S_A)=0$ and the constraint reads $x(S_A)\geq0$.
Therefore, on $D_{k-1}$, the bounds \eqref{eq:compact-bound} and the
system of \citet{FKFH98a} imply each other, so they describe the same set
$R_k(\alpha)$.} This proves part~(b). The final statement follows from
part~(a) and the terminal-stage characterization in
Lemma~\ref{lem:stages}(c).
\end{proof}}

\begin{figure}[t]
\centering
\begin{tikzpicture}[x=1.6cm]
 \foreach \i in {1,...,4} \node[player] (p\i) at (\i,0) {$\i$};
 \draw[thick] (p1) -- node[wt] {$5$} (p2)
                   -- node[wt] {$4$} (p3)
                   -- node[wt] {$2$} (p4);
 \node[payoff,above=2pt] at (p1.north) {$2/3$};
 \node[payoff,above=2pt] at (p2.north) {$13/3$};
 \node[payoff,above=2pt] at (p3.north) {$1$};
 \node[payoff,above=2pt] at (p4.north) {$1$};
\end{tikzpicture}

\medskip
{\small
\begin{tabular}{@{}cccc@{}}
\toprule
$k$ & $\alpha_k$ & newly fixed & tight coalitions\\
\midrule
1 & $1$ & $x_1+x_2=5$, $x_3+x_4=2$ & $\{1,2\}$, $\{3,4\}$\\
2 & $6/5$ & $x_3=1$, $x_4=1$ & $\{1,2,3\}$, $\{1,2,4\}$\\
3 & $4/3$ & $x_1=2/3$, $x_2=13/3$ & $\{2,3\}$, $\{1,3,4\}$\\
\bottomrule
\end{tabular}}
\caption{The weighted path of Example~\ref{ex:path}, with its nucleon
allocation in blue, and the three stages of the algorithm.}
\label{fig:path}
\end{figure}

\begin{example}[A weighted path]\label{ex:path}
Consider the path $1$--$2$--$3$--$4$ with weights $5$, $4$ and $2$
(Figure~\ref{fig:path}), so $W=7$. At stage~1 nothing is fixed and
$\mu_A\equiv0$, so the bounds are $x_i+x_j\geq\alpha w(ij)$ and
$x_i\geq0$. Adding the bounds for $\{1,2\}$ and $\{3,4\}$ gives
$7=x(N)\geq7\alpha$, and $x=(1,4,0,2)$ attains $\alpha=1$. Hence
$\alpha_1=1$, and both bounds are tight on $P_1$, which fixes
$x_1+x_2=5$ and $x_3+x_4=2$. At stage~2 the only fixed pair avoiding
player~3 is $\{1,2\}$, so the bound for $A=\{3\}$ is
$x_3\geq\mu_{\{3\}}(\alpha)=5\alpha-5$; this is the constraint of the
coalition $\{1,2,3\}$. Symmetrically $x_4\geq5\alpha-5$. With
$x_3+x_4=2$ these give $\alpha_2=6/5$, the remaining bounds can still
be met, and $x_3=x_4=1$ becomes fixed. At stage~3,
the bounds $x_1\geq2\alpha-2$ (from $A=\{1\}$ and the fixed pair
$\{3,4\}$) and $x_2+x_3\geq4\alpha$, together with $x_1+x_2=5$ and
$x_3=1$, give $\alpha_3=4/3$ and fix the allocation
$(\tfrac23,\tfrac{13}3,1,1)$, which is the nucleon. Later stages may
have levels above one, because coalitions such as $\{2,3\}$ can then
receive more than their value.
\end{example}

\subsection{Pair bounds with a prescribed total}\label{sec:pair}

By Proposition~\ref{prop:compact}, every stage system consists of bounds
on single coordinates and on pair sums, together with efficiency. Without
efficiency, such bounds are difference constraints on signed copies of
the variables. To impose the total, we compute the smallest and largest
feasible totals and interpolate between them.

\begin{lemma}\label{lem:pair}
Let $C\geq0$ be rational. Consider a rational system $\Sigma$ of lower and
upper bounds on single coordinates $x_i$ and on pair sums $x_i+x_j$,
including equalities of these types and the bounds $0\leq x_i\leq C$
for every $i$. One Bellman--Ford computation and two minimum-cost flow
computations decide whether $\Sigma$ has a solution with $\sum_ix_i=C$,
and find one if it exists.
\end{lemma}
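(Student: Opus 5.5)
The plan is to turn $\Sigma$ into a system of difference constraints on signed copies of the variables. Then feasibility becomes a negative-cycle test, and minimizing or maximizing $\sum_i x_i$ becomes the LP dual of a minimum-cost flow problem. Finally we interpolate between a minimum-total and a maximum-total solution.

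\emph{Step 1 (signed copies).} For each $i$ introduce two nodes $i^+$, $i^-$ with potentials $y_{i^+}$, $y_{i^-}$, intended as $y_{i^+}=x_i$ and $y_{i^-}=-x_i$. Each constraint of $\Sigma$ becomes one or two difference constraints:
\begin{itemize}
\item $x_i+x_j\leq b$ becomes $y_{i^+}-y_{j^-}\leq b$ and $y_{j^+}-y_{i^-}\leq b$;
\item $x_i+x_j\geq a$ becomes $y_{i^-}-y_{j^+}\leq -a$ and $y_{j^-}-y_{i^+}\leq -a$;
\item $x_i\leq b$ becomes $y_{i^+}-y_{i^-}\leq 2b$;
\item $x_i\geq a$ becomes $y_{i^-}-y_{i^+}\leq -2a$.
\end{itemize}
Equalities give both directions. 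Call the resulting system $\Sigma'$ on $2n$ nodes.

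Every solution $x$ of $\Sigma$ gives a solution of $\Sigma'$. Conversely, let $y$ solve $\Sigma'$ and put $x_i=(y_{i^+}-y_{i^-})/2$. The system $\Sigma'$ is invariant under the involution $y_{i^\pm}\mapsto -y_{i^\mp}$. Averaging $y$ with its image therefore gives a symmetric solution, and that solution is exactly the canonical lift of $x$. Hence $x$ solves $\Sigma$. So $\Sigma$ is feasible iff the constraint graph of $\Sigma'$ has no negative cycle, and one Bellman--Ford run decides this and returns potentials, hence a solution of $\Sigma$.

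\emph{Step 2 (extreme totals via min-cost flow).} Minimize and maximize $\sum_i x_i=\tfrac12\sum_i(y_{i^+}-y_{i^-})$ over $\Sigma'$. The objective is invariant under the involution, so the symmetrization of Step~1 preserves optimal values. The optimum of $\Sigma'$ thus equals the optimum over $\Sigma$. Boundedness follows from $0\leq x_i\leq C$, which also bounds the relevant differences.

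An LP with difference constraints $y_v-y_u\leq\ell_{uv}$ and objective $\sum_v d_v y_v$, with $\sum_v d_v=0$, is the LP dual of a minimum-cost transshipment problem. This problem has arc costs $\ell_{uv}$, demands $d_v=\pm\tfrac12$ at $i^\pm$, and uncapacitated arcs. Solve it by one minimum-cost flow computation for the minimum and one for the maximum. Feasibility of the flow problem is guaranteed by LP duality, since the primal is feasible and bounded.

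From an optimal flow we recover optimal primal potentials. We take shortest-path distances in the residual network, which has no negative cycles at optimality; complementary slackness makes these distances optimal. Symmetrizing and projecting as in Step~1 then gives solutions $x^{\min}$ and $x^{\max}$ of $\Sigma$ with totals $T_{\min}$ and $T_{\max}$. I expect this recovery step, and checking that the shortest-path computation on the residual graph is covered by the stated budget, to be the main obstacle. If the flow routine already returns dual potentials, as standard successive-shortest-path methods do, this step is immediate.

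\emph{Step 3 (interpolation).} The solution set of $\Sigma$ is convex, so the attainable totals form exactly the interval $[T_{\min},T_{\max}]$. If $C$ lies outside this interval, report that no solution exists. Otherwise choose $\lambda\in[0,1]$ with
\[
\lambda T_{\min}+(1-\lambda)T_{\max}=C,
\]
and output $x=\lambda x^{\min}+(1-\lambda)x^{\max}$. All numbers involved are rational, with encoding length polynomial in the input.
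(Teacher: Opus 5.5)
Your proposal is correct and follows essentially the same route as the paper: signed copies $i^\pm$ with the reflection symmetry, one Bellman--Ford run for feasibility, two uncapacitated minimum-cost flow problems as LP duals for the extreme totals, and interpolation between the two extreme solutions. The differences are cosmetic. You use demands $\pm\tfrac12$ instead of $\pm1$, and you get flow feasibility from strong duality rather than from the explicit arcs between $i^-$ and $i^+$ coming from $0\leq x_i\leq C$; like your fallback, the paper relies on the minimum-cost flow routine returning optimal node potentials, which settles the budget concern you flagged.
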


\begin{proof}
\step{Represent the bounds by difference constraints.}
For every player $i$ introduce two nodes $i^+$ and $i^-$ with potentials
$p(i^+)$ and $p(i^-)$, intended to represent $x_i$ and $-x_i$. Each bound
becomes two difference constraints:
\begin{align}
 x_i+x_j\leq u
 &:\quad p(i^+)-p(j^-)\leq u,\quad
          p(j^+)-p(i^-)\leq u,\label{eq:upper-pair}\\
 x_i+x_j\geq\ell
 &:\quad p(j^-)-p(i^+)\leq-\ell,\quad
          p(i^-)-p(j^+)\leq-\ell,\label{eq:lower-pair}\\
 \ell_i\leq x_i\leq u_i
 &:\quad p(i^+)-p(i^-)\leq2u_i,\quad
          p(i^-)-p(i^+)\leq-2\ell_i.\label{eq:individual-potentials}
\end{align}
An equality contributes both its upper and lower bound. A constraint
$p(h)-p(g)\leq\gamma$ is an arc from $g$ to $h$ of length $\gamma$
(Figure~\ref{fig:network}). Such a system of difference constraints is
feasible exactly when the network has no negative cycle, and then
Bellman--Ford supplies feasible potentials. If Bellman--Ford finds a
negative cycle, then $\Sigma$ has no solution by the correspondence
below, and we stop.

The two systems are equivalent, and the correspondence preserves totals.
Potentials of the form $p(i^+)=x_i$, $p(i^-)=-x_i$ are feasible exactly
when $x$ solves $\Sigma$, because each translated constraint then reads
as the original bound. Moreover, if $p$ is feasible, so is its reflection
$\bar p(i^+)=-p(i^-)$, $\bar p(i^-)=-p(i^+)$: the reflection maps each
pair of constraints in \eqref{eq:upper-pair}--\eqref{eq:individual-potentials}
to itself. By convexity $(p+\bar p)/2$ is feasible, and it has the form
$(x,-x)$ with
\begin{equation}\label{eq:recover-coordinate}
 x_i=\frac{p(i^+)-p(i^-)}{2}.
\end{equation}
Hence this $x$ solves $\Sigma$, and its total is
$\sum_ix_i=\tfrac12\sum_i\bigl(p(i^+)-p(i^-)\bigr)$.

\step{Find the range of feasible totals.}
Let $L$ and $U$ be the smallest and largest total of a solution of
$\Sigma$. By the correspondence, $2U$ is the maximum of
$\sum_i\bigl(p(i^+)-p(i^-)\bigr)$ over feasible potentials $p$, and $-2L$
is the maximum of $\sum_i\bigl(p(i^-)-p(i^+)\bigr)$. Both objectives have
the form $d^{\mathsf T}p$ for coefficients $d$ that sum to zero, and for
such $d$ linear programming duality gives
\begin{equation}\label{eq:transshipment}
 \begin{split}
 \max\{d^{\mathsf T}p: p\text{ feasible}\}
 &=\min\Bigl\{\sum_{gh}\gamma_{gh}f_{gh}:\ f\geq0,\\
 &\qquad\quad\operatorname{in}_h(f)-\operatorname{out}_h(f)=d_h\
   \text{for all nodes }h\Bigr\},
 \end{split}
\end{equation}
where $\gamma_{gh}$ is the length of arc $gh$, and
$\operatorname{in}_h(f)$ and $\operatorname{out}_h(f)$ are the flow into
and out of node $h$. The right side is an uncapacitated minimum-cost
flow problem. Taking $d(i^+)=1$ and $d(i^-)=-1$ gives optimum $2U$:
one unit of flow leaves each node $i^-$ and one unit enters each node
$i^+$. Reversing these signs gives optimum $-2L$. The arcs between $i^-$
and $i^+$ coming from $0\leq x_i\leq C$ give feasible flows, and feasible
potentials exist, so both linear programs have optima. A minimum-cost flow algorithm
returns an optimal flow together with optimal node potentials, and these
potentials attain the maximum \citep{Orli93a}. For example, the
successive shortest path method, started from the Bellman--Ford
potentials, needs $n$ Dijkstra computations, one for each unit of flow. Formula
\eqref{eq:recover-coordinate} turns the optimal potentials into solutions
$y$ and $z$ of $\Sigma$ with totals $L$ and $U$.

\step{Impose the prescribed total.}
A solution with total $C$ exists exactly when $L\leq C\leq U$.
If $L=U$, then $y$ itself has total $C$. Otherwise the convex combination
\[
 \frac{U-C}{U-L}\,y+\frac{C-L}{U-L}\,z
\]
has total $C$ and satisfies every bound and equality of $\Sigma$.
\end{proof}

\begin{figure}[t]
\centering
\begin{tikzpicture}[>={Stealth[length=2.6mm]},
   pot/.style={circle,draw,thick,fill=white,inner sep=1pt,
               minimum size=8mm,font=\small},
   ub/.style={->,thick,cblue},
   lb/.style={->,thick,corange,densely dashed},
   lab/.style={font=\footnotesize,fill=white,inner sep=1pt}]
 \node[pot] (ip) at (0,2.8) {$i^+$};
 \node[pot] (im) at (0,0)   {$i^-$};
 \node[pot] (jp) at (6,2.8) {$j^+$};
 \node[pot] (jm) at (6,0)   {$j^-$};
 \node[font=\footnotesize,cgray,left=3pt]  at (ip.west) {$x_i$};
 \node[font=\footnotesize,cgray,left=3pt]  at (im.west) {$-x_i$};
 \node[font=\footnotesize,cgray,right=3pt] at (jp.east) {$x_j$};
 \node[font=\footnotesize,cgray,right=3pt] at (jm.east) {$-x_j$};
 % individual bounds
 \draw[ub] (im) to[bend left=28]  node[lab,pos=0.5] {$2u_i$} (ip);
 \draw[lb] (ip) to[bend left=28]  node[lab,pos=0.5] {$-2\ell_i$} (im);
 \draw[ub] (jm) to[bend right=28] node[lab,pos=0.5] {$2u_j$} (jp);
 \draw[lb] (jp) to[bend right=28] node[lab,pos=0.5] {$-2\ell_j$} (jm);
 % pair bounds
 \draw[ub] (jm) to[bend right=7] node[lab,pos=0.2] {$u$} (ip);
 \draw[lb] (ip) to[bend right=7] node[lab,pos=0.2] {$-\ell$} (jm);
 \draw[ub] (im) to[bend left=7]  node[lab,pos=0.2] {$u$} (jp);
 \draw[lb] (jp) to[bend left=7]  node[lab,pos=0.2] {$-\ell$} (im);
 % legend
 \begin{scope}[shift={(-0.4,-1.2)}]
  \draw[ub] (0,0) -- (0.8,0);
  \node[font=\footnotesize,anchor=west] at (0.9,0) {from an upper bound};
  \draw[lb] (4.6,0) -- (5.4,0);
  \node[font=\footnotesize,anchor=west] at (5.5,0) {from a lower bound};
 \end{scope}
\end{tikzpicture}
\caption{The network in the proof of Lemma~\ref{lem:pair} for two players
with bounds $\ell_i\leq x_i\leq u_i$, $\ell_j\leq x_j\leq u_j$ and
$\ell\leq x_i+x_j\leq u$. An arc from $g$ to $h$ of length $\gamma$
encodes $p(h)-p(g)\leq\gamma$. Exchanging the two rows and reversing every arc maps
the network to itself; this symmetry turns any feasible potentials into
potentials of the form $(x,-x)$. To find the largest total, one unit of
flow leaves every node $i^-$ and one unit enters every node $i^+$.}
\label{fig:network}
\end{figure}

\subsection{Exact arithmetic}\label{sec:exact}

The compact system \eqref{eq:compact-bound} changes with $\alpha$, because
its matchings do. Nevertheless, $R_k(\alpha)$ shrinks as $\alpha$ grows,
so the stage optimum can be found by binary search. The following size
bound, a version of Proposition~3.1 of \citet{FKFH98a}, makes the search
exact. Scaling all weights by a positive integer scales allocations by
the same factor and leaves every ratio unchanged, so we may assume that
the weights are integers. Let $\Omega=1+\sum_{e\in E}w(e)$, so every
coalition value is less than $\Omega$, and let
\[
 \Delta=(n+\Omega)^{2n}.
\]
Its encoding length $O(n\log(n+\Omega))$ is polynomial in the input length.

\begin{lemma}\label{lem:precision}
Let the weights be integers. For $1\leq k\leq K$, the level $\alpha_k$ is
a fraction with denominator at most $\Delta$, and every vertex of $P_k$ is
a rational vector whose coordinates have a common denominator at most
$\Delta$. Consequently, the stored values $c(A)$ and the minimum and
maximum of every pair or singleton sum over $P_k$ are fractions with
denominator at most $\Delta$. All these numbers lie in $[0,W]$.
\end{lemma}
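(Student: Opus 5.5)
We argue by induction on $k$, using Cramer's rule on explicit linear systems. The key point is that every constraint ever involved, either in describing $P_k$ or in the stage LP for $\alpha_k$, can be taken to be a coalition equality or inequality. Each such constraint has $0/1$ coefficients on $x$ and integer right-hand sides bounded by $\Omega$, or $\alpha$-coefficients $v(S)<\Omega$.

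First we describe $P_k$ without reference to earlier levels. By Lemma~\ref{lem:stages}, $P_k$ is cut out of $I$ by $x(S)\geq\alpha_j v(S)$ for $S\in U_{j-1}$, $j\leq k$. A more useful description keeps the fixed equalities. Let $E_k,F_k$ and the constants $c$ be as in Section~\ref{sec:compact}. By Proposition~\ref{prop:compact}(a), $P_k$ lies in the affine space $\{x(N)=W,\ x(e)=c(e),\ x_i=c(i)\}$, together with the remaining inequalities.

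We then run a joint induction. The hypothesis at step $k$ is that the constants $c(A)$ determined on $P_{k-1}$ have denominators bounded suitably. We consider the stage LP~\eqref{eq:stage-value} in the variables $(x,\alpha)\in\R^{n+1}$. Its optimum is attained at a vertex determined by $n+1$ linearly independent tight constraints. These are drawn from three kinds:
\begin{itemize}
\item $x_i\geq0$;
\item $x(N)=W$;
\item constraints of the form $x(S)-\alpha v(S)\geq0$ or $x(S)\geq\alpha_j v(S)$ from earlier stages.
\end{itemize}

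The earlier-stage constraints have non-integral right-hand sides, which is awkward. To avoid this, we eliminate the earlier levels. Each earlier tight constraint $x(S)=\alpha_j v(S)$ is replaced, using the structure of Lemma~\ref{lem:stages}(b), by a chain. At stage $j$, the coalition $S_j$ fixed there satisfies $x(S_j)=\alpha_j v(S_j)$. Hence $\alpha_j=x(S_j)/v(S_j)$, and the constraint becomes $v(S_j)x(S)-v(S)x(S_j)=0$.

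This homogeneous form is the cleanest route. The point $(x,\alpha_k)$ is then the unique solution of an $(n+1)\times(n+1)$ system with the following properties:
\begin{itemize}
\item every entry is an integer of absolute value less than $\Omega$;
\item every row has one of the forms $x_i=0$, $x(N)=W$, $x(S)-\alpha v(S)=0$, or $v(T)x(S)-v(S)x(T)=0$.
\end{itemize}
The same system without the $\alpha$ column, with $\alpha_k$ eliminated in the same way, describes vertices of $P_k$.

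By Cramer's rule, every coordinate and $\alpha_k$ are ratios of determinants, with common denominator the system determinant. The determinant of an integer $(n+1)\times(n+1)$ matrix whose rows have $\ell_1$-norm at most $n\Omega+\Omega$ is bounded by Hadamard's inequality. A cruder bound is $\prod_r\|{\rm row}_r\|_1\leq((n+1)\Omega)^{n+1}$. This is at most $(n+\Omega)^{2n}=\Delta$ for $n\geq3$, since $(n+1)\Omega\leq(n+\Omega)^2$ and $n+1\leq 2n$.

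The \emph{main obstacle} is justifying that the tight constraints at a vertex can be chosen so that the earlier-level substitutions stay within this row form. Each substitution produces rows with entries up to $\Omega$ in absolute value, not merely $0/1$ entries. So we must check that the row norms stay below $(n+1)\Omega$, and that independence is preserved after the substitution. For the independence check we use that $S_j$ is tight throughout $P_j\supseteq P_k$. Hence the substituted row is implied on the affine hull, and replacing rows does not lose rank.

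The remaining claims then follow. Pair and singleton sums at vertices of $P_k$ are sums of coordinates with a common denominator at most $\Delta$. Their extrema over the polytope $P_k$ are attained at vertices, and fixed values $c(A)$ are such sums. Finally, all values lie in $[0,W]$ because $x\geq0$ and $x(N)=W$.
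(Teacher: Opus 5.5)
There is a genuine gap, and it sits exactly at the step you flag as the main obstacle. Your justification for that step does not work.

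Replacing a tight row $x(S)=\alpha_j v(S)$ by $v(S_j)x(S)-v(S)x(S_j)=0$ is a row operation with the row of $S_j$. It is guaranteed to preserve rank only if the row $x(S_j)=\alpha_j v(S_j)$ stays in the system, and that row has the non-integral right-hand side you wanted to remove. That $S_j$ is tight on all of $P_j$ shows the new row is \emph{valid} on the affine hull. It does not show the new row is independent of the other chosen rows. In particular, if the tight row you start from is the row of $S_j$ itself, the substitute is the zero row.

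Here is a concrete failure. In Example~\ref{ex:path}, the vertex $(2/5,23/5,1,1)$ of $P_2$ is the unique solution of $x(N)=7$, $x_1+x_2=5$ (level $\alpha_1$), $x_1+x_2+x_3=6$ and $x_1+x_3+x_4=12/5$ (level $\alpha_2$). Take $S_1=\{1,2\}$ and $S_2=\{1,2,3\}$, both fixed at their stages. Your substitution then leaves $x(N)=7$, two zero rows, and $3x_1-2x_2+3x_3+5x_4=0$, a system of rank two.

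What is missing is a reason why the tight rows at a vertex of $P_k$, with every level treated as an unknown, still have full rank. Only then does eliminating the levels leave $n$ independent integer rows. Your stated induction on denominators cannot supply this. Feeding earlier $\alpha_j$ back in as right-hand sides multiplies the bounds from stage to stage, rather than giving the uniform bound $\Delta$.

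The paper supplies this by making the levels variables. It considers $Q_k=\{(x,a):x\in I,\ x(S)\geq a_jv(S)\text{ for }j\leq k,\ S\in U_{j-1}\}$. In $Q_k$, the set $P_k\times\{(\alpha_1,\ldots,\alpha_k)\}$ is the face of lexicographic maximizers of $(a_1,\ldots,a_k)$. So every vertex of $P_k$, lifted with all levels, is a vertex of $Q_k$. It is therefore the unique solution of $n+k\leq 2n-1$ independent tight rows, with right-hand sides $0$ or $W$ and Euclidean norms at most $n+\Omega$. Hadamard's inequality then bounds the common denominator, including that of $\alpha_k$, by $(n+\Omega)^{2n-1}\leq\Delta$.

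Your homogeneous $x$-only system can be recovered afterwards. Eliminate the $a_j$ columns from this square system, pivoting for each $a_j$ on a tight level-$j$ row that is actually in the basis. Full rank then comes from the vertex property of $Q_k$, not from $S_j$ being fixed on $P_j$.

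Two smaller points:
\begin{itemize}
\item The facts $(n+1)\Omega\leq(n+\Omega)^2$ and $n+1\leq 2n$ only give $((n+1)\Omega)^{n+1}\leq(n+\Omega)^{2n+2}$, not the exponent $2n$ you claim. The inequality you need is true for $n\geq3$, but it requires a different argument.
\item The bound $\alpha_k\leq W$ uses that every $S\in U_{k-1}$ has integer value $v(S)\geq1$. It does not follow from $x\geq0$ and $x(N)=W$ alone.
\end{itemize}
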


\begin{proof}
Consider the polyhedron $Q_k$ of all pairs $(x,a)\in\R^N\times\R^k$
with $x\in I$ and $x(S)\geq a_jv(S)$ for all $j\leq k$ and
$S\in U_{j-1}$. It contains no line, because $x$ ranges over the bounded
set $I$ and each $a_j$ is bounded above.

We claim that the points of $Q_k$ with $a=(\alpha_1,\ldots,\alpha_k)$
are exactly $P_k\times\{(\alpha_1,\ldots,\alpha_k)\}$, and that these
are the points of $Q_k$ that lexicographically maximize
$(a_1,\ldots,a_k)$. Indeed, if $(x,a)\in Q_k$ has $a_j=\alpha_j$ for
all $j<i$, then $x\in P_{i-1}$ and therefore $a_i\leq\alpha_i$ by
\eqref{eq:stage-value}; equality is attained by every point of $P_k$.
Lexicographic maximization of linear functions over a polyhedron yields
a face, so $F=P_k\times\{(\alpha_1,\ldots,\alpha_k)\}$ is a face of $Q_k$.
Every vertex of $F$ is a vertex of $Q_k$, and $F$ has a vertex because
$Q_k$ contains no line. The vertices of $P_k$ are the $x$-parts of the
vertices of $F$.

A vertex of $Q_k$ is the unique solution of $m=n+k$ linearly
independent rows of the defining system, holding with equality; here
$m\leq2n-1$ because $k\leq K\leq n-1$ by Lemma~\ref{lem:stages}(b). Each row
has at most $n$ coefficients equal to $1$ and at most one coefficient
$-v(S)$ with $v(S)<\Omega$, so its Euclidean norm is at most $n+\Omega$. By
Cramer's rule, all coordinates of the vertex have the common denominator
$|\det B|$ of the chosen $m\times m$ submatrix $B$, and Hadamard's
inequality gives $|\det B|\leq(n+\Omega)^{m}\leq\Delta$. This bounds the
denominators of $\alpha_k$ and of the vertices of $P_k$. A pair or
singleton sum attains its extremes over $P_k$ at vertices, and a fixed
value $c(A)$ equals its value at any vertex. A sum of two coordinates
with a common denominator has the same denominator. Finally, every
$S\in U_{k-1}$ has an integer value $v(S)\geq1$ and $x(S)\leq W$, so
$0\leq\alpha_k\leq W$.
\end{proof}

Two different fractions with denominators at most $\Delta$ differ by at
least $1/\Delta^2$. Hence binary search with a feasibility test locates
$\alpha_k$ exactly. Keep an interval whose lower end passes the test and
whose upper end fails it, and halve it until its length is below
$1/\Delta^2$. Then $\alpha_k$ is the unique fraction with denominator at
most $\Delta$ in the interval, namely the fraction with the smallest
denominator there, which the continued-fraction expansions of the
endpoints produce \citep{Sch86a}. The same search, applied to a test with
an added bound $x(A)\leq\tau$ or $x(A)\geq\tau$, finds the minimum and
maximum of $x(A)$ over $P_k$. Each search uses polynomially many tests on
numbers of polynomial encoding length.

\subsection{The algorithm}\label{sec:proof}

For a stage $k$ and $\alpha\geq0$, let $\textsc{Test}_k(\alpha)$ compute
$\mu_A(\alpha)$ for every $A\in\mathcal{A}_{k-1}$ by maximum-weight
matching, and then decide with Lemma~\ref{lem:pair} (with $C=W$) whether
some $x\in D_{k-1}$ satisfies $x(A)\geq\alpha w(A)+\mu_A(\alpha)$ for all
$A\in\mathcal{A}_{k-1}$; the redundant bounds $x_i\leq W$ are added to
meet the hypothesis of Lemma~\ref{lem:pair}. By
Proposition~\ref{prop:compact}(b), for $\alpha\geq\alpha_{k-1}$ the test
succeeds exactly when $R_k(\alpha)\neq\emptyset$.

Let $w_{\min}$ be the smallest positive edge weight, and let
\begin{equation}\label{eq:alpha-bar}
 \bar\alpha=\frac{W}{w_{\min}}+1 .
\end{equation}
Every coalition $S$ with $v(S)>0$ satisfies $v(S)\geq w_{\min}$, so every
imputation $x$ has $x(S)\leq W<\bar\alpha v(S)$. Thus no coalition of
positive value can reach ratio $\bar\alpha$. The algorithm uses
$\bar\alpha$ as the upper end of each binary search, and the test at
$\bar\alpha$ detects when no stage remains.
Algorithm~\ref{alg:nucleon} summarizes the method.
\rev{It records the integer scaling factor $\lambda$ and returns the
allocation and polytope in the original units. In the returned description,
the efficiency constant $W$ and every fixed value $c(A)$ are divided by
$\lambda$.}

\begin{algorithm}[t]
\caption{Nucleon of a matching game $(G,w)$}\label{alg:nucleon}
\begin{algorithmic}[1]
\State \rev{$\lambda\gets$ least common multiple of the denominators of $w$}
\State \rev{$w\gets\lambda w$; $W\gets$ maximum weight of a matching in $G$}
\State \rev{\textbf{if} $W=0$ \textbf{then return} $\{0\}$ and the zero allocation}
\State \rev{\textbf{if} $n\leq2$ \textbf{then return} $\lambda^{-1}I$ and any member of it}
\State \rev{$w_{\min}\gets$ smallest positive scaled edge weight}
\State $k\gets1$;\ $\alpha_0\gets0$;\ $E_0,F_0\gets\emptyset$;\
       $\bar\alpha\gets W/w_{\min}+1$
\While{$\textsc{Test}_k(\bar\alpha)$ fails}
  \State $\alpha_k\gets\max\{\alpha\in[\alpha_{k-1},\bar\alpha]:
         \textsc{Test}_k(\alpha)\}$
         \Comment{Section~\ref{sec:exact}}
  \For{every pair or singleton $A$}
    \State compute the minimum and maximum of $x(A)$ over $P_k=R_k(\alpha_k)$
  \EndFor
  \State $E_k,F_k,c\gets$ the pairs and players whose extremes agree
  \State $k\gets k+1$
\EndWhile
\State \rev{find $z\in D_{k-1}$ using Lemma~\ref{lem:pair}}
\State \Return \rev{$\lambda^{-1}D_{k-1}$ and $z/\lambda$}
\end{algorithmic}
\end{algorithm}

\begin{proof}[Proof of Theorem~\ref{thm:algorithm}]
We show that Algorithm~\ref{alg:nucleon} is correct and runs in
polynomial time.

\step{Integral weights.}
Multiply all weights by the least common multiple $\lambda$ of their
denominators. This multiplies $W$ and all allocations by $\lambda$ and
leaves every ratio, and $\bar\alpha$, unchanged. We run the algorithm on
the scaled weights, so that Lemma~\ref{lem:precision} applies, and divide
the output by $\lambda$.

\step{Trivial cases.}
If $W=0$, nonnegativity and efficiency force the zero allocation. If
$n\leq2$, no proper coalition has positive value, so the nucleon is $I$.
Otherwise $n\geq3$ and $W>0$, and no pair or singleton is fixed on
$P_0=I$, so $E_0=F_0=\emptyset$.

\step{Termination test.}
Consider stage $k$, where $E_{k-1}$, $F_{k-1}$ and their values
$c(\cdot)$ are known exactly. If $U_{k-1}\neq\emptyset$, then
$R_k(\bar\alpha)=\emptyset$, because no coalition in $U_{k-1}$ can reach
ratio $\bar\alpha$. If $U_{k-1}=\emptyset$, then
$R_k(\bar\alpha)=P_{k-1}\neq\emptyset$. Moreover
$\bar\alpha>\alpha_{k-1}$: this is clear for $k=1$, and for $k\geq2$ any
$S\in U_{k-2}$ and $x\in P_{k-1}$ satisfy
$\alpha_{k-1}v(S)\leq x(S)<\bar\alpha v(S)$. By
Proposition~\ref{prop:compact}(b), the test at $\bar\alpha$ therefore
succeeds exactly when $U_{k-1}=\emptyset$, that is, when $k-1=K$. In that
case Proposition~\ref{prop:compact} gives $\nucleon(N,v)=D_{k-1}$, so the
algorithm returns the nucleon in the form
\begin{equation}\label{eq:output}
 \left\{x\in\R^N_{\geq0}:
 \begin{array}{l}
 x(N)=W,\\
 x_i+x_j=c(ij)\quad(ij\in E_{k-1}),\\
 x_i=c(i)\quad(i\in F_{k-1})
 \end{array}\right\},
\end{equation}
together with a member of it computed by Lemma~\ref{lem:pair}.

\step{One stage.}
Otherwise, $\textsc{Test}_k(\alpha_{k-1})$ succeeds by
Lemma~\ref{lem:stages}(a), and $\textsc{Test}_k(\bar\alpha)$ fails.
The binary search of Section~\ref{sec:exact} therefore returns $\alpha_k$
exactly. A pair or singleton is fixed on $P_k$ exactly when its minimum
and maximum over $P_k$ agree, so the algorithm stores precisely
$E_k$, $F_k$ and their values. Proposition~\ref{prop:compact}(a) for
$k+1$ guarantees that these equalities also determine every other fixed
coalition payoff, so the next stage is the accelerated refinement of
Section~\ref{sec:stages}.

\step{Running time.}
By Lemma~\ref{lem:stages}(b) there are at most $n-1$ stages. Each stage
runs $O(n^2)$ binary searches, and each search makes polynomially many
tests by Lemma~\ref{lem:precision}. A test performs $O(n^2)$
maximum-weight matching computations, one Bellman--Ford computation and
two minimum-cost flow computations on a network with $2n$ nodes and
$O(n^2)$ arcs, all on numbers of polynomial encoding length. Thus the
algorithm runs in polynomial time.
\end{proof}

In Example~\ref{ex:triangle}, the first stage fixes all three pair sums at
$2/3$, and the algorithm immediately returns the equal split.
Example~\ref{ex:path} shows a run with three stages.

\section{Hardness for \texorpdfstring{$b$}{b}-matching and partitioned
matching games}\label{sec:hardness}

The algorithm above uses the fact that a matching decomposes a coalition
into disjoint pairs and possibly unmatched players. Capacities allow
several selected edges to share a player, and in a partitioned game one
player can own both ends of several edges, so pair sums no longer provide
that decomposition. We prove hardness through a different connection:
the total payoff assigned to a new independent partnership records the
stability of the original game.

\begin{theorem}\label{thm:hardness}
Computing an exact member of the nucleon of a uniform $b$-matching game
is NP-hard, even when $b_i\leq3$ for every player $i$.
\end{theorem}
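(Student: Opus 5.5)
The plan is to reduce from core nonemptiness of uniform $b$-matching games with $b_i\leq3$, which is coNP-hard by \citet[Theorem~1]{BBK+25a}. Given such an instance $(G,b)$ with player set $N_G$ and $W=v(N_G)$, we form $G'$ by adding two new players $a,a'$ with $b_a=b_{a'}=1$ and one unit-weight edge $aa'$ disjoint from $G$. The new game is still uniform with capacities at most three, and $v'(N')=W+1$. The claim to prove is that for every nucleon allocation $x$ of $G'$ we have $x_a+x_{a'}=1$ if the core of $G$ is nonempty, and $x_a+x_{a'}<1$ otherwise. A polynomial-time algorithm returning an exact nucleon member would then decide a coNP-hard problem. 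Deciding the complement is polynomially equivalent, so the computation task is NP-hard.

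\emph{Nonempty core.} Take a core allocation $y$ of $G$ and split $1$ equally between $a$ and $a'$. Every coalition $T\cup Q$ with $T\subseteq N_G$ and $Q\subseteq\{a,a'\}$ has value $v(T)$, or $v(T)+1$ when $Q=\{a,a'\}$, since the edge is disjoint. So this allocation lies in the core of $G'$. Hence the core of $G'$ is nonempty, and every nucleon allocation lies in it (Section~\ref{sec:model}). Core constraints then give $x(N_G)\geq W$ and $x_a+x_{a'}\geq1$. Efficiency, $x(N')=W+1$, forces $x_a+x_{a'}=1$.

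\emph{Empty core.} First, $\alpha_1(G')<1$. If $\alpha_1\geq1$, then some imputation is in the core of $G'$. The same two inequalities would then force $x(N_G)=W$, so its restriction would be a core allocation of $G$, a contradiction. Also $\alpha_1>0$: the equal split gives every coalition positive payoff. By Lemma~\ref{lem:stages} every nucleon allocation lies in $P_1$, so it suffices to show that every optimum $x$ of \eqref{eq:first-stage} has $t:=x_a+x_{a'}=\alpha_1$. Note that $t\geq\alpha_1$ always holds.

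Suppose instead that $t>\alpha_1$. Move a small $\delta>0$ from $a,a'$ to $N_G$, scaling $x|_{N_G}$ by the factor $1+\delta/(W+1-t)$. This scaling is where positivity is needed. Every positive-valued $T\subseteq N_G$ has $x(T)\geq\alpha_1v(T)>0$, so each ratio $x(T)/v(T)$ strictly increases. Next we check the coalitions of $G'$ of the remaining types.
\begin{itemize}
\item A coalition $T\cup\{a\}$ or $T\cup\{a'\}$ has ratio at least that of $T$.
\item A coalition $T\cup\{a,a'\}$ has ratio $(x(T)+t-\delta)/(v(T)+1)$. This is a mediant of the ratio for $T$ and $(t-\delta)/1$, so it is at least the smaller of the two. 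For $\delta$ small, both exceed $\alpha_1$.
\item The edge coalition $\{a,a'\}$ itself has ratio $t-\delta>\alpha_1$.
\end{itemize}
Since there are only finitely many coalitions, a small enough $\delta$ makes every positive-valued proper coalition's ratio strictly exceed $\alpha_1$. This contradicts the optimality of $\alpha_1$. Hence $x_a+x_{a'}=\alpha_1<1$.

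The step needing most care is this perturbation. One must check that no coalition mixing $N_G$ with exactly one of $a,a'$ (for instance $N_G\cup\{a\}$, which is proper) becomes tight, and that $\alpha_1>0$ so that the scaling strictly improves every positive-valued subset of $N_G$. Both points are handled by the two observations above.
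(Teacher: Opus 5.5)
Your proposal is correct and uses the paper's reduction: the same coNP-hard source problem from \citet[Theorem~1]{BBK+25a}, the same disjoint unit edge $aa'$ with capacity one, and the same rule that $x_a+x_{a'}<1$ exactly when the core is empty. The difference is in how that rule is justified.

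The paper proves a general direct-sum calibration result (Lemma~\ref{lem:calibration}). It defines for each component the ratio $r_j$, which also compares the component's own grand coalition. It then reduces the first stage to splitting the total budget among components and obtains the closed forms $\alpha^*=\bar W/\sum_j W_j/r_j$ and $x(N_j)=\alpha^*W_j/r_j$. For your instance this gives $x_a+x_{a'}=\alpha^*=(W+1)/(W/r_1+1)$, which equals one exactly when $r_1=1$.

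You instead split into cases. When the core is nonempty, you use that nucleon allocations lie in the core, together with efficiency. When it is empty, you show $\alpha_1<1$ by contradiction. You then show $x_a+x_{a'}=\alpha_1$ at every first-stage optimum by a local exchange that moves $\delta$ from the edge to $N_G$ and rescales.

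Your argument is self-contained and never needs $r_j$. However, it only yields $t=\alpha_1$ and the sign of $1-\alpha_1$, not the value of the budget. The paper's lemma computes that value (e.g.\ the $4/5$ in Figure~\ref{fig:triangle}(b)) and covers any number of components. It is also stated for arbitrary nonnegative games, so it is reused verbatim for partitioned games. Your exchange would also carry over there, since it never uses zero singleton values.

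One detail you should state explicitly: the factor $1+\delta/(W+1-t)$ requires $x(N_G)>0$. This holds because an empty core forces $W>0$, so $x(N_G)\geq\alpha_1W>0$.
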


\begin{theorem}\label{thm:partitioned}
Computing an exact member of the nucleon of a uniform partitioned
matching game is NP-hard, even when every player owns at most three
vertices.
\end{theorem}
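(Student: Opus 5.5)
We reduce from core nonemptiness. By \citet{BBK+25a}, deciding whether the core of a uniform partitioned matching game of width at most three is nonempty is coNP-hard. Given such an instance $(H,w,(V_p)_{p\in N})$ with value function $v$ and $W=v(N)$, we add two new players $a,a'$. Each owns a single new vertex, and the two new vertices are joined by one unit-weight edge. The result is again a uniform partitioned game of width at most three; call its value function $v'$. Because the new edge is disjoint from $H$, we have $v'(S\cup T)=v(S)+v'(T)$ for $S\subseteq N$ and $T\subseteq\{a,a'\}$, where $v'(\{a,a'\})=1$ and $v'(\{a\})=v'(\{a'\})=0$. Also $v'(N\cup\{a,a'\})=W+1$. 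We will show that at every nucleon allocation $x$ of $v'$,
\[
x_a+x_{a'}=1 \iff \core(N,v)\neq\emptyset .
\]
An exact nucleon allocation therefore decides a coNP-hard problem with one comparison, which gives NP-hardness under polynomial-time (Turing) reductions.

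\emph{Nonempty core.} Suppose $y\in\core(N,v)$. Then $(y,\tfrac12,\tfrac12)$ pays every coalition of $v'$ at least its value, by the additivity above, so the core of $v'$ is nonempty. As noted after \eqref{eq:first-stage}, every nucleon allocation then lies in $\core(N\cup\{a,a'\},v')$. The core constraints $x(N)\geq W$ and $x_a+x_{a'}\geq1$, together with efficiency $x(N)+x_a+x_{a'}=W+1$, force $x_a+x_{a'}=1$.

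\emph{Empty core.} Suppose $\core(N,v)=\emptyset$. If $v'$ had a core allocation, the same squeeze would give $x(N)=W$, and $x|_N$ would lie in $\core(N,v)$. Hence $\core(v')=\emptyset$, and so $\alpha:=\alpha_1(v')<1$. We also need $\alpha>0$; for instance, the allocation giving every vertex-owning unit equal share pays every positive-valued coalition a positive amount. Now let $x$ be a nucleon allocation and suppose, for contradiction, that $t=x_a+x_{a'}\geq1$. Since $x\in P_1$, its smallest ratio equals $\alpha$.

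Perturb $x$ by multiplying every payoff in $N$ by $1+\delta$ and every payoff of $a,a'$ by $1-\eta$, where $\eta t=\delta\,x(N)$ keeps the total fixed and $\delta>0$ is small. We check every coalition whose ratio equals $\alpha$.
\begin{itemize}
\item A coalition $S\subseteq N$: its ratio $x(S)/v(S)=\alpha>0$, so $x(S)>0$ and the ratio strictly increases.
\item A coalition $S\cup\{a,a'\}$: its ratio is a mediant of $x(S)/v(S)\geq\alpha$ and $t/1\geq1>\alpha$, so it is strictly above $\alpha$. It therefore cannot attain the minimum.
\item A coalition $S\cup\{a\}$ with ratio exactly $\alpha$: this forces $x_a=0$, since otherwise the ratio would exceed $x(S)/v(S)\geq\alpha$. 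Then $x_a$ remains $0$ under the scaling, and the ratio strictly increases as in the first case. The same holds for $S\cup\{a'\}$.
\end{itemize}
All other ratios stay above $\alpha$ for small $\delta$. The perturbed imputation thus has minimum ratio strictly greater than $\alpha$, contradicting the definition of $\alpha_1(v')$. Hence $t<1$.

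The main obstacle is this empty-core direction. Lowering the payoff of $\{a,a'\}$ could, in principle, hurt mixed coalitions $S\cup\{a\}$, and the mediant argument together with the case $x_a=0$ is what handles them. Theorem~\ref{thm:hardness} follows by the identical construction, using the coNP-hardness of core nonemptiness for uniform $b$-matching games with $b\leq3$ \citep[Theorem~1]{BBK+25a}. There the new players receive capacity one, and $v'$ is again additive over the disjoint edge.
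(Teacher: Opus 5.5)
Your proof is correct, and it reaches the key dichotomy by a different route from the paper. The paper first proves a general direct-sum calibration lemma (Lemma~\ref{lem:calibration}). For a direct sum of games with component guarantees $r_j$, it shows that the first-stage level is $\alpha^*=\bar W/\sum_j (W_j/r_j)$. It also shows that every first-stage optimum has the uniquely determined component totals $x(N_j)=\alpha^* W_j/r_j$. Specialised to the added edge, this gives the exact value $x_a+x_{a'}=\alpha^*=(W+1)/(W/r_1+1)$, which equals $1$ exactly when $r_1=1$, that is, when the core is nonempty.

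You instead split into two cases. For a nonempty core, you use that nucleon allocations lie in the core and squeeze with efficiency. For an empty core, you rule out $t\geq 1$ by a proportional rescaling that raises payoffs in $N$ and lowers those of $a,a'$. You check that every tight coalition strictly improves, while coalitions containing both $a$ and $a'$ are strictly slack by the mediant inequality.

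Your argument is more elementary and local. It needs neither the auxiliary quantities $r_j$ nor the weighted-average decomposition of coalitions that meet several components, and it does not even require $W>0$. The paper's lemma gives more information: the exact edge total and optimal level (for instance $4/5$ in Figure~\ref{fig:triangle}), uniqueness of all component budgets, and a statement for any number of components that serves both hardness theorems at once. Both arguments apply to every first-stage optimal allocation, not only to nucleon allocations, so both also give hardness of the first stage.

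One small slip: in your second bullet the part $S$ may have $v(S)=0$. This includes the coalition $\{a,a'\}$ itself, where no mediant is available. In that case, however, the ratio is $x(S)+t\geq t>\alpha$ directly, so the conclusion is unaffected.
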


\begin{remark}\label{rem:reductions}
Both theorems concern producing one allocation, although the nucleon may
contain several. The reductions behind them make a single call to an
algorithm that returns a nucleon allocation, and then add the payoffs of
two specified players. They work whichever nucleon allocation the
algorithm returns, and they do not need the optimal first-stage ratio as
an additional output.
\end{remark}

{Both reductions are instances of the following direct-sum lemma.}

\subsection{{Direct sums calibrate component budgets}}

{The hardness gadget is most transparent as a general fact about direct
sums. Let $(N_1,v_1),\ldots,(N_m,v_m)$, $m\geq2$, be nonnegative games on
pairwise disjoint player sets, with $W_j=v_j(N_j)>0$. Their \emph{direct
sum} is the game {$(\bar N,\bar v)$ on $\bar N=\dot\bigcup_{j=1}^m N_j$}
defined by
\begin{equation}\label{eq:direct-sum}
 {\bar v(S)}=\sum_{j=1}^m v_j(S\cap N_j).
\end{equation}
For component $j$, define its best multiplicative guarantee when its own
grand coalition is also compared,
\begin{equation}\label{eq:component-ratio}
 r_j=\max_{y\geq0:\,y(N_j)=W_j}
       \ \min_{T\subseteq N_j:\,v_j(T)>0}\frac{y(T)}{v_j(T)}.
\end{equation}
Because $N_j$ itself is included, $0<r_j\leq1$, and $r_j=1$ if and only if
$\core(N_j,v_j)\neq\emptyset$.

\begin{lemma}[Direct-sum calibration]\label{lem:calibration}
Let {$\bar W=\sum_{j=1}^m W_j$}. The first-stage optimum of the direct-sum game is
\begin{equation}\label{eq:direct-sum-alpha}
 \alpha^*=\frac{{\bar W}}{\displaystyle\sum_{j=1}^m W_j/r_j},
\end{equation}
and every first-stage optimal imputation assigns the uniquely determined
component totals
\begin{equation}\label{eq:direct-sum-budget}
 x(N_j)=\alpha^*\frac{W_j}{r_j}\qquad(j=1,\ldots,m).
\end{equation}
Consequently, $\alpha^*\leq1$, with equality if and only if every component
has a nonempty core.
\end{lemma}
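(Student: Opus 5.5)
We must show that the first-stage optimum of the direct sum is $\alpha^*$, with the stated component budgets. Set $\beta_j=x(N_j)$ for an imputation $x$ of $(\bar N,\bar v)$, so that $\sum_j\beta_j=\bar W$. The plan is to prove matching upper and lower bounds on the first-stage value by separating the problem into independent components.

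\emph{Upper bound and forced budgets.} Fix an imputation $x$ and a level $\alpha$ with $x(S)\ge\alpha\bar v(S)$ for every proper $S$ of positive value. Restrict to coalitions $T\subseteq N_j$ with $v_j(T)>0$. Each such $T$, including $T=N_j$, is a proper coalition of the direct sum because $m\ge2$, and $\bar v(T)=v_j(T)$. If $\beta_j>0$, the rescaled vector $y=(W_j/\beta_j)\,x|_{N_j}$ is feasible for \eqref{eq:component-ratio}. Hence some $T$ has $y(T)\le r_jv_j(T)$, which gives $\alpha\le x(T)/v_j(T)\le r_j\beta_j/W_j$. If $\beta_j=0$, the coalition $N_j$ forces $\alpha\le0$, and the same inequality holds. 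Therefore $\beta_j\ge\alpha W_j/r_j$ for every $j$. Summing over $j$ yields $\bar W\ge\alpha\sum_jW_j/r_j$, that is, $\alpha\le\alpha^*$.

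At $\alpha=\alpha^*$ the summed inequality is an equality. Each individual inequality must therefore be tight, which forces \eqref{eq:direct-sum-budget}.

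\emph{Attainment.} Let $y^j$ attain $r_j$ in \eqref{eq:component-ratio}; the maximum exists because the feasible set is compact and the objective is a minimum of finitely many continuous functions. Define $x|_{N_j}=(\alpha^*/r_j)\,y^j$. The component totals are then $\alpha^*W_j/r_j$, and these sum to $\bar W$, so $x$ is an imputation.

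Now take any $S$ with $\bar v(S)>0$. We have $x(S)=\sum_j x(S\cap N_j)$. The components with $v_j(S\cap N_j)=0$ contribute nonnegatively. For every other component, $x(S\cap N_j)\ge(\alpha^*/r_j)\,r_jv_j(S\cap N_j)=\alpha^*v_j(S\cap N_j)$. Adding these gives $x(S)\ge\alpha^*\bar v(S)$, so the first-stage value is exactly $\alpha^*$.

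\emph{Consequences.} The preamble gives $0<r_j\le1$, and the bound $r_j>0$ makes $\alpha^*$ well defined. Since $W_j/r_j\ge W_j$, we get $\sum_jW_j/r_j\ge\bar W$ and hence $\alpha^*\le1$. Equality holds exactly when every $r_j=1$, which by the stated characterization means every component core is nonempty.

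If that characterization needs justification, it is short. We have $r_j\le1$ by taking $T=N_j$. Moreover $r_j=1$ if and only if some $y\ge0$ with $y(N_j)=W_j$ satisfies $y(T)\ge v_j(T)$ for all $T$, and such a $y$ is exactly a core element. The only subtle point in the whole argument is that each component's own grand coalition $N_j$ counts as a proper coalition of the sum. This is exactly why $r_j$ includes $N_j$, and it is what the case $\beta_j=0$ relies on.
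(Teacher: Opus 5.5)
Your proof is correct and takes essentially the same route as the paper. Rescaling a component optimizer shows that a budget $\beta_j$ supports a minimum ratio of at most $r_j\beta_j/W_j$, coalitions spanning several components are no worse than their positive-valued component parts, and summing the forced bounds $\beta_j\geq\alpha W_j/r_j$ gives both $\alpha^*$ and the unique budgets. You are more explicit than the paper: you separate the upper bound from an explicit attaining allocation and treat $\beta_j=0$ on its own.
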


\begin{proof}
Fix an imputation $x$ of the direct sum and write $s_j=x(N_j)$. Within
component $j$, scaling an optimizer of \eqref{eq:component-ratio} shows
that the largest achievable minimum ratio with total budget $s_j$ is
$r_js_j/W_j$. Moreover, the minimum ratio over all positive-valued proper
coalitions of the direct sum equals the minimum, over all $j$, of the
{smallest ratio $x(T)/v_j(T)$ over coalitions $T\subseteq N_j$ with
$v_j(T)>0$}. Indeed, each component coalition is proper in the direct sum, while
the ratio of any coalition meeting several components is a weighted average
of the ratios of its positive-valued component parts (zero-valued parts can
only add nonnegative payoff).

Hence the first stage reduces to
\[
 \max\left\{\min_j\frac{r_js_j}{W_j}:s_j\geq0,\ 
                  \sum_j s_j={\bar W}\right\}.
\]
If the common guaranteed ratio is $\alpha$, then necessarily
$s_j\geq \alpha W_j/r_j$ for every $j$. Summing gives
$\alpha\leq {\bar W}/(\sum_j W_j/r_j)$, and equality is attained by setting all
these inequalities to equality. This yields
\eqref{eq:direct-sum-alpha}--\eqref{eq:direct-sum-budget} and also proves
uniqueness of the component totals. Since every $r_j\leq1$, equality
$\alpha^*=1$ holds exactly when $r_j=1$ for all $j$, equivalently when all
component cores are nonempty.
\end{proof}

The reduction below uses the special case in which one component is the
original game $(N,v)$, with $W=v(N)>0$, and the other is a two-player
unit-edge game on $\{a,a'\}$. The latter has $W_2=r_2=1$, and the direct
sum has value function
\begin{equation}\label{eq:extension}
 v'(S)=v(S\cap N)+
 \begin{cases}
 1,&\{a,a'\}\subseteq S,\\
 0,&\text{otherwise}.
 \end{cases}
\end{equation}
By Lemma~\ref{lem:calibration}, every first-stage optimum---hence every
nucleon allocation---has
\begin{equation}\label{eq:core-test}
 x_a+x_{a'}\leq1,
 \qquad
 x_a+x_{a'}=1\ \Longleftrightarrow\ \core(N,v)\neq\emptyset.
\end{equation}
Thus the isolated edge is a one-number certificate of whether the original
component can satisfy all of its coalitions simultaneously. For the unit
triangle of Example~\ref{ex:triangle}, $r_1=2/3$ and $W_1=1$, giving
$x_a+x_{a'}=4/5$ as in Figure~\ref{fig:triangle}(b).}

\subsection{The reductions}

We now apply the lemma to the hard core instances. In both cases, all
that changes in the graph is the addition of one isolated edge.

\begin{proof}[Proof of Theorem~\ref{thm:hardness}]
By \citet[Theorem~1]{BBK+25a}, deciding core nonemptiness is
coNP-hard for uniform $b$-matching games with capacities at most three.
Its complement, deciding core emptiness, is therefore NP-hard under
polynomial-time many-one reductions. Instances with $W=0$ have the zero
allocation in their core and can be decided directly, so we may assume
$W>0$.

Given such an instance $(G,b)$, add a disjoint unit-weight edge $aa'$
and give $a,a'$ capacity one. The enlarged game has exactly the value
function \eqref{eq:extension}, because $b$-matchings in the two graph
components can be chosen independently. The construction preserves
unit weights and the upper bound three on capacities.

Suppose an algorithm returns any exact nucleon allocation $x$ of this
enlarged game, and let $t=x_a+x_{a'}$. Lemma~\ref{lem:calibration}
gives the decision rule
\[
 \core(N,v)=\emptyset\quad\Longleftrightarrow\quad t<1.
\]
Thus one allocation call solves an NP-hard decision problem. The
argument applies to every possible nucleon output, so no choice among
allocations avoids the reduction.
\end{proof}

\begin{proof}[Proof of Theorem~\ref{thm:partitioned}]
By \citet[Corollary~1]{BBK+25a}, deciding core nonemptiness is coNP-hard
for uniform partitioned matching games of width at most three; they
obtain this from the $b$-matching result above through a reduction that
preserves unit weights and turns capacity three into width three. Hence
deciding core emptiness is NP-hard for these games. As before, instances
with $W=0$ can be decided directly.

Given such an instance, add two vertices $a$ and $a'$ joined by a
unit-weight edge, and two new players, also called $a$ and $a'$, who own
$\{a\}$ and $\{a'\}$ respectively. The width stays at most three and all
weights remain one. A matching in the subgraph owned by a coalition $S$
consists of a matching in the original graph, owned by $S\cap N$, and
possibly the edge $aa'$, which is available exactly when both new players
belong to $S$. Hence the enlarged game has the value function
\eqref{eq:extension}.

{The specialization of Lemma~\ref{lem:calibration} only requires the
original component to be nonnegative with $W>0$; in particular, it allows
singletons of positive value, which occur in partitioned matching games.
Therefore any exact nucleon allocation $x$ of the enlarged game again
satisfies $\core(N,v)=\emptyset$ if and only if $x_a+x_{a'}<1$.}
\end{proof}

The hardness already comes from the multiplicative least-core stage:
every allocation that attains its optimum in the enlarged instance
has the edge total in Lemma~\ref{lem:calibration}. Later lexicographic
comparisons cannot erase the decision encoded by that total, so finding
a first-stage optimal allocation is NP-hard as well, for both classes.
{The optimal value is also hard to compute: for the original component,
the parameter $r_1$ in \eqref{eq:component-ratio} equals
$\min\{1,\alpha_1(N,v)\}$, because the only additional comparison is the
grand coalition $N$, whose ratio is one. Thus deciding whether
$\alpha_1(N,v)\geq1$ is coNP-hard for both classes by the same results of
\citet{BBK+25a}. The disjoint edge transfers this information from the
optimal value to every optimal allocation.} The conclusions concern exact computation and do not
establish hardness of approximation.

\Needspace{16\baselineskip}
\section{The capacity-two frontier}\label{sec:capacity-two}

{The complexity of exact nucleon computation for $b$-matching games
with $b_i\leq2$ remains open. The first multiplicative least-core stage is
polynomially separable using the known capacity-two core machinery
\citep{SV25a}. \rev{To exclude the grand coalition, check $x\geq0$,
$x(N)=W$ and $\alpha\geq0$ separately, and apply the path and cycle
inequality-separation routines of \citet{SV25a} to each induced graph
$G-r$, with weights $\alpha w$ and the restricted payoff vector. No
efficiency equality is imposed on an induced game. These routines
separate the coalition inequalities because every $b$-matching decomposes
into vertex-disjoint paths and cycles, and uncovered vertices have
nonnegative payoff. Every proper coalition omits some vertex $r$, so the
$n$ calls detect exactly the possible violations. This remains valid when
the optimal ratio exceeds one.}

The difficulty in later stages is different: coalition
constraints whose payoffs are already fixed must be excluded. The recent
linear-subspace-avoidance framework of \citet{EE26a} isolates exactly this
obstacle for the nucleolus. Here we only need to observe that the same
oracle also separates the multiplicative stages of the nucleon.

For a nonempty polytope $P\subseteq\R^N$, let
\[
 L(P)=\lin(P-P)^\perp.
\]
Identifying a coalition $S$ with its incidence vector $\chi^S$, the payoff
$x(S)=\langle\chi^S,x\rangle$ is constant on $P$ if and only if
$\chi^S\in L(P)$. Thus, with $L_{k-1}=L(P_{k-1})$, \rev{the
positive-valued proper coalitions still relevant at stage $k$ are exactly
those whose incidence vectors lie outside $L_{k-1}$.}

\begin{definition}[{\LSA{} {\citep{EE26a}}}]
Given a cooperative game $(N,v)$, a rational vector $y\in\mathbb Q^N$,
and a proper rational linear subspace $L\subsetneq\R^N$ specified by a
basis, \LSA{} asks for a coalition $S\subseteq N$ with $\chi^S\notin L$
minimizing the excess $y(S)-v(S)$.
\end{definition}

\begin{definition}[{\SNZC{} {\citep{EE26a}}}]
Given an undirected graph $H=(V,E)$, edge costs $c\in\mathbb Q^E$ that are
\emph{conservative} (every cycle has nonnegative total cost), and integral
edge labels $a\in\mathbb Z^E$, \SNZC{} asks for a minimum-cost cycle $C$
with $a(C)=\sum_{e\in C}a_e\neq0$.
\end{definition}

\citet[Theorem~3.1]{EE26a} prove that, for simple $b$-matching games with
$b\leq2$, \LSA{} and \SNZC{} are polynomially equivalent{, via an
intermediate problem of finding a maximum-weight matching whose label sum
is nonzero}. We use this result as a black box; the only new
point is \rev{how separation at a multiplicative nucleon stage reduces
to an \LSA{} instance.}
For a candidate $x\in P_{k-1}$ and $\alpha\geq0$, separation asks whether
some unfixed positive-valued proper coalition violates
$x(S)\geq\alpha v_w(S)$, equivalently whether
\begin{equation}\label{eq:b2-lsa}
 \min_{\chi^S\notin L_{k-1}}
       \{x(S)-\alpha v_w(S)\}<0.
\end{equation}
Since scaling nonnegative edge weights commutes with $b$-matching
optimization,
\[
 v_{\alpha w}(S)=\alpha v_w(S)\rev{.}
\]
\rev{Thus the minimum in \eqref{eq:b2-lsa} can be computed by \LSA{} for
the capacity-two game with weights $\alpha w$ and vector $x$, and its sign
decides whether a violation exists.} No exceptional coalitions are hidden in
this reformulation: $\chi^\emptyset=0\in L_{k-1}$; efficiency makes
$\chi^N\in L_{k-1}$; and a zero-valued coalition has
$x(S)-\alpha v_w(S)=x(S)\geq0$, so it cannot witness a violation.

\begin{theorem}\label{thm:b2-snzc}
Computing an exact nucleon allocation of a $b$-matching game with
nonnegative rational edge weights and $b_i\leq2$ for every player $i$ is
polynomial-time Turing reducible to \SNZC{}.
\end{theorem}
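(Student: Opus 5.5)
We follow the accelerated stage scheme of Section~\ref{sec:stages} and implement every stage with the ellipsoid method, using the separation routine described just before the statement. For stage $k$ that routine is an \LSA{} call. By \citet[Theorem~3.1]{EE26a} each such call reduces in polynomial time to \SNZC{}. By Lemma~\ref{lem:stages}(b) there are at most $n-1$ stages, and by Lemma~\ref{lem:stages}(c) the scheme ends at $P_K=\nucleon(N,v)$. So it suffices to carry out polynomially many stages, each with polynomially many oracle calls on polynomial-size numbers.

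\textbf{Representation of the current stage.} We maintain $P_{k-1}$ implicitly: it is $I$ together with polynomially many explicit equalities $\langle\chi^{S},x\rangle=c_S$ recorded at earlier stages. We also keep a basis of $L_{k-1}$ given by these recorded $\chi^S$ together with $\chi^N$.

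The first point to verify is that this implicit description is enough. We must check that the fixed coalitions on $P_{k-1}$ are exactly those with $\chi^S$ in the span of the recorded vectors. We must also check that the remaining inequalities $x(S)\geq\alpha_j v(S)$ for $S\in U_{j-1}$ are handled inside the oracle. Since $U_{k-1}\subseteq U_{j-1}$ and $\alpha_j\leq\alpha_k$, Lemma~\ref{lem:stages}(a) shows that imposing only the stage-$k$ constraints on the affine set is equivalent to imposing all earlier ones. This mirrors the argument in Proposition~\ref{prop:compact}(b).

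\textbf{One stage.} We optimize the joint linear program in $(x,\alpha)$: maximize $\alpha$ subject to $x\in I$, the recorded equalities, and $x(S)\geq\alpha v(S)$ for $\chi^S\notin L_{k-1}$. Separation of a candidate $(x,\alpha)$ is one call to \eqref{eq:b2-lsa}. If the minimum is negative, the returned coalition $S$ gives the violated linear inequality $x(S)-\alpha v(S)\geq0$ in the variables $(x,\alpha)$. Using the Grötschel--Lovász--Schrijver equivalence of separation and optimization, with the encoding bounds of Lemma~\ref{lem:precision} (its proof is a generic Hadamard argument valid for any nonnegative integer game), we obtain $\alpha_k$ exactly and a point of $P_k$ in the relative interior.

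\textbf{Recording newly fixed coalitions.} We then need equalities describing $\operatorname{aff}P_k$. We compute it by repeated optimization over $P_k$, which again has a separation oracle via \LSA{} at fixed $\alpha=\alpha_k$. Standard GLS techniques find the affine hull of a well-described polytope given a strong separation oracle.

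We expect this to be the main obstacle. The affine hull must be expressed through coalition incidence vectors, so that the next subspace $L_k$ has a basis of the right form. We plan to address it as in Lemma~\ref{lem:stages}(b). Find a maximal relative-interior point $\bar x$ of $P_k$ by averaging optimizers, then repeatedly call \LSA{} to collect coalitions $S\notin L_{k-1}$ that are tight at $\bar x$ at ratio $\alpha_k$. The minimum excess at $\bar x$ is $0$. After finding a tight $S$, we add $\chi^S$ to $L$ and query again until the minimum becomes positive. At most $n$ additions occur, because each raises $\dim L$.

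Tightness at a relative-interior point implies that $S$ is fixed on $P_k$. Conversely, $L(P_k)$ is spanned by $L_{k-1}$ and the tight rows, which are exactly these coalition vectors. So $L_k$ obtains a coalition basis, and the recursion continues.

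After $K$ stages, any vertex of $P_K$ computed by the same optimization routine is an exact rational nucleon allocation.
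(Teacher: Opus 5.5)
Your proof is correct and shares the paper's skeleton: accelerated stages, GLS with an \LSA{}-based separation oracle, \citet[Theorem~3.1]{EE26a} as a black box, and Lemma~\ref{lem:precision} for exactness. Where you differ is in how each stage is organized.

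The paper keeps the stage polytope nested. A candidate is first separated from $P_{k-1}$ by the previous stage's oracle, recursively, so one query triggers a chain of up to $k$ \LSA{} calls. $L_k$ is then read off from a generic GLS affine-hull computation.

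You flatten instead. $I$ cut by the recorded equalities is $I\cap\operatorname{aff}P_{k-1}$, which is in general strictly larger than $P_{k-1}$ despite your wording, and you impose only the stage-$k$ constraints on it. This rests on the general-game analogue of the last step of Proposition~\ref{prop:compact}(b), which you state correctly. Once the recorded vectors span $L_{k-1}$, every coalition fixed on $P_{k-1}$ keeps its value on $I\cap\operatorname{aff}P_{k-1}$. Earlier constraints on unfixed coalitions are dominated for $\alpha\geq\alpha_{k-1}$. What this buys is a single \LSA{} call per separation query. At termination it also gives an explicit description of the nucleon as $I$ cut by at most $n$ coalition equalities, since $P_K=I\cap\operatorname{aff}P_K$ once $U_K=\emptyset$.

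The obstacle you anticipated is not really one. \LSA{} takes $L$ through any rational basis, and your flattening works with any equality system for $\operatorname{aff}P_{k-1}$, so the GLS affine hull already suffices. Your tight-coalition extraction is still valid, because every inequality $x(S)\geq\alpha_k v(S)$ with $\chi^S\notin L_{k-1}$, including singletons and zero-valued coalitions, is valid on $P_k$. Take $\bar x$ as the barycentre of affinely independent points spanning $\operatorname{aff}P_k$, since an average of arbitrary optimizers can lie on the relative boundary.

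A few routine points that the paper makes explicit are missing:
\begin{itemize}
\item scaling to integer weights, and the case $v(N)=0$;
\item a terminality test (the paper calls \LSA{} with $y=0$; detecting unboundedness of $\alpha$ also works);
\item the bounds $0\leq\alpha\leq W$, which keep the stage polyhedron bounded and make $v_{\alpha w}=\alpha v_w$ valid at every candidate $\alpha$.
\end{itemize}
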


\begin{proof}
\rev{We implement the successive stages using the preceding separation
routine. If $v_w(N)=0$, return the zero allocation. Otherwise multiply
the weights by the least common multiple $\lambda$ of their denominators
and work with this integer weighting, still denoted by $w$. At the end
we divide the resulting allocation by $\lambda$. Now $W=v_w(N)>0$,
$P_0=I$ is given explicitly, and $L_0=L(I)$ is spanned by $\chi^N$.

Before each stage, test terminality. If $L_{k-1}=\R^N$, the stage is
terminal. Otherwise call \LSA{} with $y=0$, subspace $L_{k-1}$, and
weights $w$, without multiplying them by a candidate $\alpha$. Its
optimum is negative exactly when a positive-valued coalition lies outside
$L_{k-1}$. Since $\emptyset$ and $N$ lie in $L_{k-1}$, this is precisely
the condition that the stage is nonterminal.

At a nonterminal stage, impose $0\leq\alpha\leq W$. This loses no optimum:
every relevant coalition has integer value at least one and payoff at
most $W$. For a candidate $(x,\alpha)$, first check these bounds and
separate $x$ from $P_{k-1}$ using the preceding stage's oracle. If
$x\in P_{k-1}$, the observation above and
\citet[Theorem~3.1]{EE26a} separate the new coalition constraints using
polynomially many \SNZC{} calls. Thus we obtain a strong separation
oracle for the bounded stage polyhedron. We optimize the current level,
recover the affine hull of the optimal polytope to determine
$L_k$, and continue on that polytope, following the successive-stage
framework of \citet{MPS79a} in its multiplicative form
(Section~\ref{sec:stages}) and its oracle implementation by
\citet{EE26a}. Lemma~\ref{lem:stages}(b) bounds the number of nonterminal
stages by $n-1$.

The reduction is exact over rational inputs. Lemma~\ref{lem:precision}
supplies polynomial encoding bounds for all stage levels and vertices:
its determinant argument uses only integrality of coalition values,
the bound $v(S)<\Omega=1+\sum_e w(e)$, and Lemma~\ref{lem:stages}, so it
applies verbatim to $b$-matching games. The standard exact
separation--optimization machinery \citep{GLS88a} therefore performs
the optimizations, affine-hull computations and equality tests in
polynomial time. By Lemma~\ref{lem:stages}(c), the terminal polytope is
the nucleon of the scaled game. A vertex of it, found by the same
machinery and divided by $\lambda$, is an exact nucleon allocation of
the original game.}
\end{proof}

\rev{The qualification in Theorem~\ref{thm:b2-snzc} is important. The
\emph{unrestricted} \LSA{} optimization oracle used by our separation
routine is polynomially equivalent to \SNZC{} by \citet{EE26a}. Actual
stage queries have additional structure: the payoff vector is
nonnegative and efficient, the avoided subspace comes from a preceding
optimal polytope, and separation only needs a violated inequality or a
certificate that none exists. We establish no reverse reduction for
these restricted queries or for full nucleon computation. Thus a
polynomial-time algorithm for \SNZC{} would imply one for capacity-two
nucleon computation, while the unconditional complexity of the latter
remains open.}}

\Needspace{12\baselineskip}
\section{Conclusions}

The nucleon of a matching game admits an exact polynomial-time
combinatorial algorithm, whereas finding any nucleon allocation is
NP-hard for unit-weight $b$-matching games with capacities at most three
and for unit-weight partitioned matching games of width at most three.
These results resolve Open Problems~5 and~7 of \citet{BBJ+23a}. Two
questions remain. Can the matching-game algorithm be made strongly
polynomial? What is the complexity of nucleon computation when every
capacity is at most two, or when every player owns at most two vertices?
In both cases core nonemptiness can be decided in polynomial time
\citep{BBK+25a}, so our first-stage reduction does not apply. For
capacities at most two, \rev{Section~\ref{sec:capacity-two} gives a
polynomial-time Turing reduction to \SNZC{} through an unrestricted
minimum-excess optimization oracle. Whether the restricted separation
queries arising in actual stages admit a simpler algorithm remains open.
The width-two partitioned case also remains open, without a comparable
oracle reduction.}

\section*{Declaration of generative AI assistance}
The authors acknowledge the assistance of OpenAI's GPT Sol 6.1 and Anthropic's Fable 5.1 in working out the proof details of the combinatorial algorithm.

% \section*{Acknowledgments}

% Haris Aziz is supported by the NSF-CSIRO grant on ``Fair Sequential
% Collective Decision-Making'' (RG230833).

%\bibliographystyle{elsarticle-num-names}
\bibliographystyle{plainnat}
% Audit correction: use the basenames of the supplied bibliography files.
\bibliography{references}

\end{document}